\documentclass[10pt,journal,comsoc]{IEEEtran}  

\usepackage{cite}
\usepackage{amsmath,amssymb,amsfonts}
\usepackage{graphicx}
\usepackage{graphics}
\usepackage{subfigure}
\usepackage{algorithm}
\usepackage{algorithmic}
\usepackage{multirow}
\usepackage{longtable}
\usepackage{array}
\usepackage{textcomp}
\usepackage{xcolor}
\usepackage{multicol}

\usepackage{amsthm}

\newtheorem{thm}{Theorem}

 \usepackage{booktabs} 

\usepackage{microtype}
\usepackage{subcaption}

\def\BibTeX{{\rm B\kern-.05em{\sc i\kern-.025em b}\kern-.08em
		T\kern-.1667em\lower.7ex\hbox{E}\kern-.125emX}}

\begin{document}

\title{ Adaptive  Multi-UAV Relay Deployment Framework in Satellite Aerial Ground Integrated Systems}

\author{\IEEEauthorblockN{Bhola,~\IEEEmembership{Member,~IEEE}, 
Yu-Jia Chen,~\IEEEmembership{Senior Member,~IEEE},
Ashutosh Balakrishnan,~\IEEEmembership{Member,~IEEE},
Swades De,~\IEEEmembership{Senior Member,~IEEE}, and
Li-Chun Wang,~\IEEEmembership{Fellow,~IEEE}
}

\thanks{Copyright (c) 2026 IEEE. Personal use of this material is permitted. However, permission to use this material for any other purposes must be obtained from the IEEE by sending a request to pubs-permissions@ieee.org.}

\thanks{This work has been partially funded by the National Science and Technology Council under the Grants 114-2221-E-A49 -185 -MY3, and 113-2218-E-A49 -027 -, and 114-2224-E-A49 -002 -, and 114-2218-E-A49 -019 -Taiwan.}

\thanks{
An early version of this work was presented at the IEEE Vehicular Technology Conference (VTC-Fall) 2024~\cite{10757512}.


Bhola is with the Department of Electrical and Computer Engineering, National Yang Ming Chiao Tung University, Taiwan R.O.C. (email: bhola.ee06@nycu.edu.tw) 

Yu-Jia Chen is with the National Central University, Chung-Li 32001, Taiwan (e-mail: yjchen@ce.ncu.edu.tw)

A. Balakrishnan is with the Department of Computer Science and Networks, Télécom Paris, 91120 Paris, France. (email: balakrishnan@telecom-paris.fr)

S. De is with the Department of Electrical Engineering and Bharti School of Telecommunications, Indian Institute of Technology Delhi, India. (email: swadesd@ee.iitd.ac.in)

L.-C. Wang is with the Department of Electrical and Computer Engineering, National Yang Ming Chiao Tung University, Taiwan R.O.C. (email:wang@nycu.edu.tw) \emph{(Corresponding author: Li-Chun Wang.)} 

}

}

\markboth{Preprint for IEEE Transactions on Vehicular Technology}
{}

\maketitle

\begin{abstract}
    The sixth-generation (\(\mathbf{6}\)G) communication networks are expected to provide high data rates, ultra-reliable communication, and massive connectivity, especially in challenging environments such as dense urban areas and disaster-affected regions. However, traditional terrestrial-only networks face significant challenges in these scenarios, including signal blockages from high-rise buildings, traffic congestion, and dynamic user distributions. To address these limitations, we propose the adaptive multi-UAV deployment (AMUD) framework within satellite air-ground integrated networks (SAGINs). The AMUD framework dynamically deploys amplify-and-forward multiple unmanned aerial vehicle relay (UAVr) in with low Earth orbit (LEO) satellites to improve coverage, alleviate congestion, and ensure reliable communication in non-line-of-sight and high-demand conditions. We formulate an optimization problem that aims to jointly maximize the energy efficiency of the total network and the total capacity while ensuring the fairness of the total capacity and satisfying the user's requirements. The simulation results demonstrate that AMUD improves the total capacity of the network, improves the total energy efficiency, and increases the fairness of the capacity compared to traditional LEO satellite and ground base station (LEO-GBS) only systems.
\end{abstract}

\begin{IEEEkeywords}
    $6$G Networks, LEO satellites, multiple UAVr, SAGIN, cooperative diversity, demand-supply aware balance.
\end{IEEEkeywords}

\begin{table}[!t]
\caption{Key Notations}
\label{tab:key_notations}
\centering
\renewcommand{\arraystretch}{1.1}
\begin{tabular}{p{0.98cm} p{6.7cm}}
\toprule
\textbf{Symbol} & \textbf{Definition} \\
\midrule

$\mathbf{u}_i$ & $2$D location vector of user $i$ at time $t$ \\
$\mathbf{U}_j$ & $3$D location vector of UAVr $j$ at time $t$ \\
$\mathbf{h}_{i \leftarrow s}$ & Channel vector from the satellite to user $i$ ($L$ antennas) \\
$\hbar_{j \leftarrow s}$ & Scalar channel coefficient from the satellite to UAVr $j$ \\
{$\mathbf{h}_{i \leftarrow j \leftarrow s}$} & {Channel vector from UAVr $j$ to user $i$ ($L$ antennas)} \\
$x_{\text{sym}}$ & Symbol transmitted by the satellite \\
$P^{\text{tx}}_s$ & Satellite transmit power \\
$p_{i \leftarrow j}$ & Transmit power from UAVr $j$ to user $i$ \\
$\mathcal{G}$ & Amplify-and-forward gain at the UAVr \\
$\mathbf{n}_{i \leftarrow s}$ & AWGN vector at user $i$ from the satellite link \\
$n_{j \leftarrow s}$ & Scalar AWGN at UAVr $j$ from the satellite \\
{$\mathbf{n}_{i \leftarrow j \leftarrow s}$} & {AWGN vector at user $i$ from the UAVr link} \\
$\mathbf{r}^{i \leftarrow s}$ & Received signal vector at user $i$ from the satellite \\
$r^{j \leftarrow s}$ & Received signal at UAVr $j$ from the satellite \\
$\mathbf{r}^{i \leftarrow j \leftarrow s}$ & Received signal vector from UAVr $j$ to user $i$ \\
$\gamma_{i \leftarrow j}$ & SINR of the UAVr-to-user link \\
$\gamma_{j \leftarrow s}$ & SINR of the satellite-to-UAVr link \\
$\gamma_{i \leftarrow s}$ & SINR of the satellite-to-user link \\
$V_{j \leftarrow s}$ & Satellite visibility indicator for UAVr $j$ \\
$p^{\text{Hov}}_j$ & Hovering power consumption of UAVr $j$ \\
$C_j$ & Data rate of the UAVr-assisted link \\
$C_G$ & Data rate of the GBS link \\
$C_{\text{Tot}}$ & Total system network capacity \\
{$ p^{\text{Total}}$} & {Total power consumption of the downlink system}\\
$E_{\text{Tot}}$ & System energy efficiency \\
\bottomrule
\end{tabular}
\end{table}

\section{Introduction}
\label{sec:introduction_3}

\IEEEPARstart{T}{he} demand for reliable wireless access has grown significantly, driven by emerging technologies such as augmented reality, the Internet of Things (IoT), and autonomous vehicles. By 2023, Cisco projects that mobile users will reach 13.1 billion globally, with 29.3 billion devices enabled by the Internet \cite{liu2022evolution}. This surge in demand and increasing quality of service (QoS) requirements lead to dynamic user traffic and hotspots that vary in space-time within wireless networks \cite{9383780}.
Although ground-based stations (GBS) have traditionally served network users, they struggle with increasing user density and signal blockages, especially in urban environments. These random non-line-of-sight (NLOS) conditions and obstructions degrade service quality, creating the need for more agile network solutions.

To address these limitations, 6G communication networks are expected to integrate terrestrial and non-terrestrial elements (i.e., aerial and space-based technologies) to enhance network capacity and resilience~\cite{10757512, 8626457,9275613}. Satellite-air-ground integrated networks (SAGINs) are considered a key enabler of scalable, dynamic, and adaptive 6G systems \cite{AB-ICC2024}.
This paper introduces an AMUD framework within SAGINs that dynamically deploys unmanned aerial vehicle relay (UAVr) to improve coverage and signal-to-interference-plus-noise ratio (SINR), especially for users facing signal blockages or GBS congestion. The framework aligns with the $6$G vision by addressing urban communication bottlenecks and enabling flexible, aerial-ground cooperative relay strategies.

\begin{figure*} [!t]
	\centering
	\includegraphics[width=.85\textwidth]{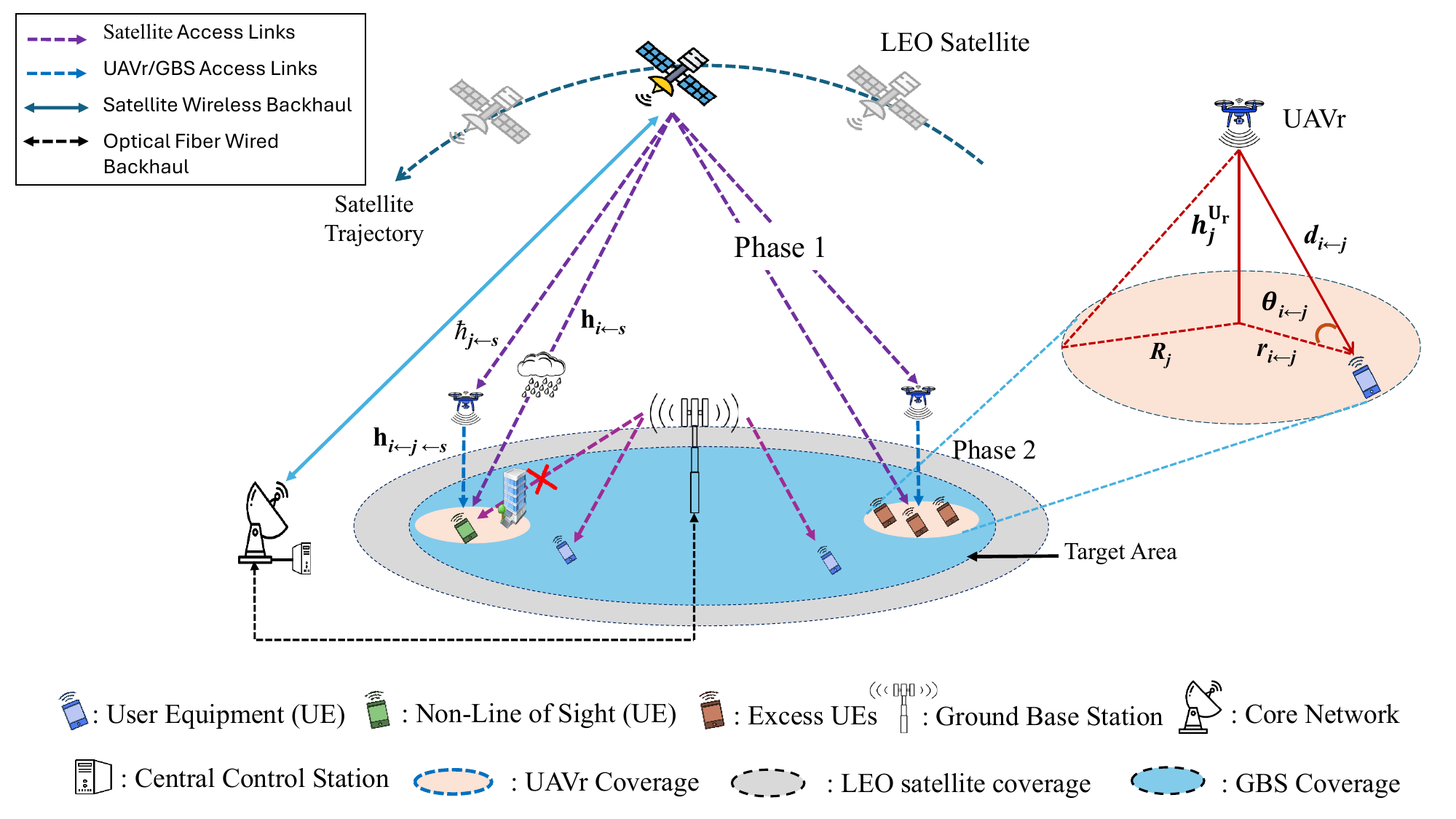}
	\caption{\small {System Model: Hybrid Network Overview.}}
	\label{fig: system model_3}
\end{figure*}

\subsection{Motivation}
Recent interest has focused on integrating terrestrial networks with space-based LEO satellites to provide seamless connectivity and high-speed broadband access for 6G communications. LEO satellite networks consist of a mega-constellation of satellites and wireless backbones. Traditionally, satellites have been used primarily for communication in rural areas. However, in urban environments, satellite links are susceptible to masking effects caused by climatic conditions (e.g., rain, fog) and terrestrial obstructions (e.g., tunnels, dense buildings), which significantly attenuate signals reaching ground users~\cite{9755995}. Furthermore, satellite links, especially those that involve LEO satellites, face challenges in adapting to dynamic network conditions due to their inherent mobility and rapid temporal variations in Internet usage patterns \cite{10.1109/COMST.2022.3197695}. Although LEO satellites experience lower latency compared to medium-earth orbit (MEO) and geostationary earth orbit (GEO) satellites, their shorter orbital paths require continuous tracking of satellite movement related to the ground area under observation \cite{AB-GC2025,3GPPTS22125,8917591,zhao2018exact}

Despite these urban propagation challenges, LEO satellites are increasingly being considered for urban coverage to support seamless global access, offload congested terrestrial networks, and maintain service continuity during infrastructure outages or disasters. However, their practical utility in urban environments is critically dependent on solving the limitations of NLoS imposed by urban obstructions. To this end, UAVr serves as an agile intermediary that can dynamically bridge the NLoS gap between satellites and users, ensuring robust and adaptive urban connectivity.
The deployment of unmanned aerial vehicles as base stations (UAV-BS) has been a conventional method to increase capacity in terrestrial networks \cite{AB-VTC2024, 9946428,9535285,9560149,9555387}. In this paper, we propose the deployment of low-altitude UAVr to address the challenges associated with integrating LEO satellites and improving user QoS. The proposed strategy of deploying multiple UAVr is more energy efficient than traditional UAV-BS while effectively meeting user QoS requirements. Additionally, unlike terrestrial relays in heterogeneous networks, multiple UAVr offer mobility to the mobile operator. This capability allows multiple UAVr to be deployed in rural or disaster scenarios, such as floods, where traditional infrastructure may be compromised.

This paper proposes an adaptive multi-UAVr deployment framework (AMUD) for air-ground satellite integrated networks. In a fixed terrestrial area under observation, particularly with space-time-varying traffic hotspots, the AMUD framework first computes optimal locations for deploying multiple UAVr based on traffic demand distribution. The framework improves network capacity and energy efficiency by improving user QoS. User QoS is improved by exploiting spatial diversity through a cooperative diversity (CD) strategy at the ground user level, which combines signals from separate communication links from LEO satellites and multiple UAVr. The AMUD framework intelligently controls multiple UAVr and signals to provide fair communication opportunities to users within the designated experimental terrestrial coverage area.
%
%
The key contributions of this work are summarized as follows:
\begin{itemize} 

    \item This study presents an innovative architecture that establishes a multilayered communication network that seamlessly integrates multiple UAVr, LEO satellites, and GBS. This configuration enables flexible and dynamic coverage, effectively addressing congestion challenges and enhancing network adaptability. The deployment of multiple UAVr provides localized coverage in regions with varying user densities, while LEO satellites offer extensive coverage and high data transmission rates. The architecture explicitly incorporates satellite visibility and fairness considerations to ensure reliable communication in urban and remote environments.

    \item We propose the AMUD framework, a pioneering approach that combines cooperative UAVr with the LEO satellite to overcome NLoS challenges in densely populated urban settings. The framework employs amplify and forward techniques with cooperative diversity combining, allowing weighted SINR fusion of satellite–user and UAVr–user links to improve signal quality and network robustness. This dynamic approach addresses coverage gaps and congestion issues, significantly improving network capacity and energy efficiency compared to conventional systems lacking fairness or visibility awareness.

    \item An optimization problem is formulated to maximize the network's total energy efficiency and capacity while ensuring fairness among ground users. The proposed framework optimizes user association and transmission power allocation strategies, unifying these objectives within a single formulation. This integrated optimization ensures efficient resource utilization and improved performance in SAGINs.
    
\end{itemize}

%
\subsection{Organization}
Section~\ref{sec: Related work_3} presents and positions the work in the current literature. Section~\ref{sec: System Model_3} presents the system model, while Section~\ref{Adaptive UAVr Placement and Signal Combining} details the proposed AMUD framework. Section~\ref{sec: Problem Formulation_3} formulates a network energy efficiency maximization problem, and Section~\ref{The Proposed AMUD Framework} presents an algorithmic solution framework. Section~\ref{sec: Simulation Results and Performance Analysis_3} shows the simulation and performance analysis results, while Section~\ref{Sec: Conclusion_3} concludes the paper.

%
%
\begin{table*}[!ht]
\centering
\caption{Comparison of Representative Works on Relay-Based Access Architectures in SAGIN}
\label{Comparisons of Related Works and the Proposed Method_3}
\footnotesize 
\setlength{\tabcolsep}{6pt} 
\renewcommand{\arraystretch}{1.1} 
\begin{tabular}{c c c c c c c c}
\hline
\textbf{Ref.} &
\textbf{Relay Type} &
\textbf{Relay Count} &
\textbf{Trajectory Pattern} &
\textbf{Adaptive UAVr} &
\begin{tabular}[c]{@{}c@{}}\textbf{Satellite} \\ \textbf{Consideration}\end{tabular} &
\begin{tabular}[c]{@{}c@{}}\textbf{Satellite} \\ \textbf{Visibility}\end{tabular} &
\textbf{Capacity Fairness} \\ \hline
\cite{8949359,8894851} & Static & \begin{tabular}[c]{@{}c@{}}Single \\ Terrestrial\end{tabular} & -- & -- & Yes & -- & -- \\ \hline
\cite{zhao2018exact,6449258} & Static & \begin{tabular}[c]{@{}c@{}}Multi \\ Terrestrial\end{tabular} & -- & -- & Yes & -- & -- \\ \hline
\cite{9672696} & Static & \begin{tabular}[c]{@{}c@{}}Multi \\ Terrestrial\end{tabular} & -- & -- & Yes & -- & -- \\ \hline
\cite{8387982} & Mobile & Single UAVr & -- & -- & Yes & -- & -- \\ \hline
\cite{9714005,7899525} & Mobile & Single UAVr & Cyclical & -- & Yes & -- & -- \\ \hline
\cite{9062335} & Mobile & Single UAVr & Fixed & -- & Yes & -- & Yes \\ \hline
\cite{9755995} & Mobile & Multi UAVr & Cyclical & -- & Yes & -- & -- \\ \hline
\cite{9206550} & Mobile & Single UAVr & Fixed & -- & -- & -- & -- \\ \hline
\cite{9149409} & Mobile & Multi UAVr & Fixed & -- & -- & -- & -- \\ \hline
\textbf{Our Method} & \textbf{Mobile} & \textbf{Multi UAVr} & \textbf{Fixed} & \textbf{Yes} & \textbf{Yes} & \textbf{Yes} & \textbf{Yes} \\ \hline
\end{tabular}

\vspace{1mm}
\begin{minipage}{0.97\textwidth}
\footnotesize
\textit{Note:} This table focuses on relay-based architectures involving UAVr access link studies. As described in Section~\ref{sec: Related work_3}, UAVr nodes do not serve as direct access points but instead relay traffic via the nearest GBS or satellite.
\end{minipage}
\end{table*}


%
\section{Related Work}
\label{sec: Related work_3}
This section reviews recent literature on enhancing network capacity in SAGINs and positions the paper's contribution.
Table~\ref{Comparisons of Related Works and the Proposed Method_3} summarizes the related work and positions the proposed AMUD framework in terms of the type of relay, number of relays used, adaptive deployment of UAVr, access link, satellite visibility, and signal combination strategy employed.
Several recent studies have presented deployment methods to enhance network performance, considering static and mobile relay deployment without accounting for satellite visibility and adaptive relay deployment. 
%
%

Recent studies have focused on the deployment of static terrestrial relays to enhance network performance in designated regions by integrating them with satellite access links \cite{8949359,6449258,8894851,zhao2018exact,9672696}.
In \cite{8949359}, the authors examined the ergodic capacity of hybrid satellite-terrestrial networks (HSTNs) that incorporate multiple terrestrial relays in addition to a direct satellite link. They utilized asymptotic analysis for both AF and decode-and-forward (DF) protocols, employing moment-generating function (MGF) techniques. A relay selection strategy was proposed based on statistical information from channel state information (CSI) to reduce overhead.
The framework presented in \cite{6449258} employs a cooperative approach that integrates direct and terrestrial relay links to enhance diversity. Huang et al. \cite{8894851} formulated an optimization problem aimed at maximizing the system's capacity within HSTNs subject to shadowed Rician fading, deriving expressions for outage probability and ergodic capacity while introducing strategies for multi-user scheduling.
Zhao et al. \cite{zhao2018exact} investigated the ergodic capacity of a generalized HSTN featuring multiple relays and a direct satellite link. They applied MGF techniques for the AF and DF protocols and proposed a relay selection strategy that minimizes reliance on statistical CSI.
Finally, the work in \cite{9672696} explores the synergy between satellite and terrestrial relays in a dual-hop system, focusing on ergodic capacity and bit-error rate under varying channel conditions. An opportunistic relay approach was used to support multiple users effectively.
%
%
Recent studies have concentrated on strategies for the deployment of UAVs to improve network performance by using satellite access \cite{9755995,9062335,8387982,9714005,7899525,9206550,9149409}.
In \cite{9755995}, machine learning techniques were used to address link selection and UAV trajectory challenges within UAV-assisted hybrid satellite-terrestrial networks. Although notable improvements were achieved, the strategic placement of UAVs to alleviate congestion in satellite backhaul remains an open issue. Furthermore, incorporating cyclical patterns into UAV trajectory planning could optimize resource utilization and improve overall network efficiency.
The research presented in \cite{9062335} investigated multi-user asymmetric free-space optical (FSO) and radio frequency (RF) links between satellites and UAVs, accounting for Gamma-Gamma turbulence effects. The findings revealed a trade-off between fairness and system capacity when UAV trajectories were fixed.
In the realm of relay systems enabled by ultra-reliable low-latency communication (URLLC) assisted by UAVs, \cite{9206550} used the Nelder-Mead simplex search algorithm to optimize the height and positioning of UAVs in conjunction with the allocation of block length. This approach aimed to minimize decoding errors while maintaining fixed UAV paths in conjunction with reconfigurable intelligent surfaces (RIS).
Finally, \cite{9149409} introduced a Gibbs sampling method for optimal placement of UAVs as relay nodes, focusing on maximizing the minimum achievable rate for multiple pairs of ground nodes while keeping the UAV trajectories constant. 
%
%

To improve user QoS at the receiver end, recent research efforts (e.g.~\cite{8653373,6676778}) have proposed integrated techniques aimed at improving network performance, particularly through cooperative communication schemes.
The study in \cite{8653373} evaluates the effectiveness of cooperative communication using receiver-end selection combining (SC) on a Nakagami fading channel.
In \cite{6676778}, a novel cooperative space-shift keying network is introduced, featuring multi-antenna configurations for the source, destination, and relays. This approach involves selecting relays based on predetermined thresholds during the second time slot for data transmission from the source to the destination, where successive interference cancellation is employed for detection. The authors performed a comprehensive theoretical analysis of error performance, specifically addressing the scenario with two transmit antennas at the source and providing an analytical approximation for more general cases. 

As shown in Table~\ref{Comparisons of Related Works and the Proposed Method_3}, existing work has focused on optimizing traffic offloading to maximize user service and system capacity. These works primarily focus on single or multiple UAV-BS deployments with or without satellite access. Recent studies have yet to comprehensively address the effects of adaptive multiple UAVr deployment, satellite visibility, and signal-combining techniques.
Our proposed AMUD approach addresses all of these limitations. It jointly considers the adaptive deployment of multiple UAVr considering the visibility of the LEO satellite. The proposed framework also involves power control transmission on the UAVr and a combination of spacetime signals on the receiver using CD. We present the system model in the next section. 
%
%
\section{System Model}
\label{sec: System Model_3}
    The paper considers downlink resource allocation in a terrestrial network assisted by multiple UAVr and LEO satellites, as illustrated in Fig.~\ref{fig: system model_3}. The set of ground users is represented and modeled as a Poisson point process (PPP) within the GBS coverage area.
    In our simulation, urban hotspots are generated based on two criteria: (1) density-based, where the user density within a UAVr’s coverage area exceeds a predefined threshold \( \omega_j^{\max} = D_j^{\text{th}} \cdot \pi \cdot R_j^2 \), and (2) user-based, where the number of users in a GBS area exceeds its capacity, forming excess users \( \Omega_{\mathbb{G}}^{\text{Ex}} = \Omega_{\mathbb{G}} - \omega_{\mathbb{G}}^{\max} \). These dynamic conditions trigger UAV assistance for load balancing.
    We examine the dynamic behavior of the network on a set $\chi$ comprising $\varkappa$ time intervals indexed by $t$. The network configuration is considered stable given the brief duration of each time slot. We denote the two-dimensional coordinates of ground users in the time slot $t$ as $\mathbf{u}_i(t) = (x_i(t), y_i(t))$, where $i \in \{1, \ldots, N\}$ and $N$ indicates the total number of users in the system.
    The framework enables the deployment of up to $K$ multiple UAVr, represented as $\mathbf{U}_j(t) = (x_j(t), y_j(t), h^{\text{Ur}}_j(t))$, where $j \in \{1, \ldots, K\}$, within the coverage area. 
    ``It is important to note that although the user locations are monitored in real time, the deployment of UAVr is based on the final snapshot sampled within each interval. Once deployed, UAVr positions remain fixed, and thus the system behaves equivalently to a static deployment.''
    Let $p_{i  \leftarrow j}$ denote the minimum transmit power required to send a signal from the UAVr to the ground user.

    We ignore the effects of changes in the UAVr's three-dimensional ($3$D) position on the LEO satellite. The central control station (CCS) continuously monitors UAVr movement, determines user associations with multiple UAVr, and prevents collisions between them.
    Mobile users are considered hybrid \cite{9385374}, capable of communicating with LEO satellites, multiple UAVr, and GBS. Users communicate with multiple UAVr and LEO satellites on the same frequency band while using different frequencies for communication with GBS.
    Each UAVr is assumed to operate for a maximum of $30$ minutes per flight, reflecting practical battery constraints in real-world deployments~\cite{9555387}.
    We will discuss the signal and channel models in the upcoming subsections.
    
{
\noindent\textbf{Notational Convention:}
Scalars are written in regular font (e.g., $p_{i\leftarrow j}$, $\gamma_{i\leftarrow j}$), whereas all vectors and matrices are written in boldface (e.g., $\mathbf{h}_{i\leftarrow s}$, $\mathbf{n}_{i\leftarrow s}$, $\mathbf{u}_i$).
{Since each UAVr employs a single antenna, the satellite-to-UAVr channel $\hbar_{j\leftarrow s}$ and the corresponding received signal ${r}^{j \leftarrow s}$ are scalar quantities.}
}
%
\subsection{Signal model}
This subsection presents the signal model for the LEO-assisted multiple UAVr-based communication framework. The proposed framework pertains to a dual-hop cooperative diversity system comprising an LEO satellite ($s$), multiple UAVr ($\mathbf{U}_j$), and ground users ($\mathbf{u}_i$), each
equipped with $L$ antennas, as illustrated in Fig.~\ref{fig: system model_3}.
The channel coefficients are defined as follows: $\mathbf{h}_{i \leftarrow s}$ represents the complex channel coefficients from the satellite to the $l$th antenna of the ground user, $\hbar_{j\leftarrow s}$ denotes the coefficients between the satellite and the UAVr, and $\mathbf{h}_{i \leftarrow j \leftarrow s}$ corresponds to those between the UAVr and the ground user antennas.
Assuming that the satellite transmits a signal $x_{\rm sym}$ with an average power of $P_s^{\rm tx}$ during the first phase, the signal received in the UAVr from the satellite is given by
\[ {r}^{j \leftarrow s} = \hbar_{j\leftarrow s} x_{\rm sym} + n_{j\leftarrow s}.\]

{The system employs a TDMA-based scheduling strategy in which each time slot serves exactly one scheduled ground user. Although the transmit power is indexed by user and UAVr indices for notational generality, at any given time slot, only the scheduled user--UAVr pair is active. Consequently, only one transmit power variable is effective per slot, while all remaining power variables are inactive during that slot.
In Phase~I, the satellite transmits a single codeword intended for the scheduled user, which is received simultaneously by that user and by the associated UAVr. No intra-slot multi-user transmission occurs on either the satellite or UAVr links, and no bandwidth or power sharing among multiple users takes place within a time slot.}
In contrast, the signal received directly from the satellite to the user is given by
\[ \mathbf{r}^{i \leftarrow s} = \mathbf{h}_{i \leftarrow s} x_{\rm sym}
+ \mathbf{n}_{i \leftarrow s}. \]

In the second phase, the satellite remains silent, and the UAVr retransmits a scaled version of the received signal from the satellite in fixed-gain AF mode, with the received signal at the user expressed as

\begin{align*}
    \mathbf{r}^{i \leftarrow j \leftarrow s}
    &= \mathbf{h}_{i \leftarrow j \leftarrow s}\,
    \mathcal{G}\big({r}^{j \leftarrow s}\big)
    + \mathbf{n}_{i \leftarrow j \leftarrow s} \\
    &= \mathbf{h}_{i \leftarrow j \leftarrow s}\,
    \mathcal{G}\hbar_{j\leftarrow s} x_{\rm sym}
    + \mathbf{h}_{i \leftarrow j \leftarrow s}\,
    \mathcal{G}n_{j \leftarrow s}
    + \mathbf{n}_{i \leftarrow j \leftarrow s}.
\end{align*}

The above dual-hop cooperative communication framework at the user end is written as
\begin{align}
    \label{Eq: receive signal}
    \mathbf{r}^{\text{Tot}} = \mathbf{h} x_{\rm sym} + \mathbf{n},
\end{align}
where

$\mathbf{r}^{\text{Tot}} =
\begin{bmatrix}
\mathbf{r}^{i \leftarrow s} \\
\mathbf{r}^{i \leftarrow j \leftarrow s}
\end{bmatrix},
\quad
\mathbf{h} =
\begin{bmatrix}
\mathbf{h}_{i \leftarrow s} \\
\mathbf{h}_{i \leftarrow j \leftarrow s}\,
\mathcal{G}\hbar_{j\leftarrow s}
\end{bmatrix},
\quad
\mathbf{n} =
\begin{bmatrix}
\mathbf{n}_{i \leftarrow s} \\
\mathbf{h}_{i \leftarrow j \leftarrow s}\,
\mathcal{G}{n}_{j \leftarrow s}
+ \mathbf{n}_{i \leftarrow j \leftarrow s}
\end{bmatrix}$.
The noise terms $\mathbf{n}_{i \leftarrow s}$, $n_{j \leftarrow s}$, and
$\mathbf{n}_{i \leftarrow j \leftarrow s}$ denotes the AWGN at the ground user in the first hop, the scalar AWGN at the single-antenna UAVr, and the AWGN at the ground user in the second hop, respectively, and are modeled as $\mathcal{CN}(0,\sigma^2)$.
The fixed AF relay gain on the UAVr is defined as $ \mathcal{G}^2 = \frac{1}{|\hbar_{j\leftarrow s}|^2 + \sigma^2},$
where $\hbar_{j\leftarrow s}$ denotes the satellite-to-UAVr channel coefficient and $\sigma^2$ is the AWGN variance.

The AF relay gain $\mathcal{G}(t)$ models the fixed base band amplification applied to the received satellite signal at the UAVr during time slot $t$. The resulting AF output is normalized and subsequently scaled by a controllable RF transmit-power block. Therefore, the UAVr transmit power $p_{i\leftarrow j}(t)$ is treated as an independent optimization variable applied after AF processing and is not implicitly determined by $\mathcal{G}(t)$. 
Although $\hbar_{j\leftarrow s}(t)$ is a random variable, its realization is assumed to remain fixed within each time slot under the quasi-static fading model~\cite{6280241}. Therefore, the AF relay gain $\mathcal{G}(t)$ is fixed per slot and adapts only between slots as the channel evolves. This modeling ensures that the gain remains constant during AF operation within a slot while allowing time-dependent power optimization through $p_{i\leftarrow j}(t)$ across different time slots, consistent with the energy-efficiency objective.
Although the UAVr operates as a fixed-gain AF relay at the base band processing stage, the RF transmit power $p_{i\leftarrow j}(t)$ is treated as an independent optimization variable and is optimized on a per-slot basis subject to a maximum power constraint, as detailed in Section~\ref{sec: Problem Formulation_3}.

 \subsection{Channel Model}  
\label{sec:system_3}
    This subsection discusses the channel between the LEO satellite and the user, the LEO satellite to the UAVr, and the UAVr to the user. 
%
%
\subsubsection{UAVr to User}
The horizontal distance between the UAVr and the location of the user on the ground can be defined as (Fig.~\ref{fig: system model_3})
\begin{align}
    r_{i  \leftarrow j}(t) = \sqrt{(x_{j}(t) - x_{i})^2 + (y_{j}(t) - y_{i})^2}.
    \label{Eq:07_3}
\end{align}
Based on equation~\eqref{Eq:07_3}, the Euclidean distance between the UAVr and the ground user can be defined as
\begin{align}
    d_{i  \leftarrow j}(t) = \sqrt{r^2_{i  \leftarrow j}(t) + (h^{\rm Ur}_j(t))^2}.
    \label{Eq:08_3}  
\end{align}
The path loss to the user is computed using the air-to-ground channel model from \cite{al2014optimal}, based on line-of-sight (LoS) and NLoS conditions, given by
\begin{align}
    PL^\text{LoS}_{d_{i  \leftarrow j}}(t) &= 20\log_{10}\left(\frac{4\pi f_c d_{i  \leftarrow j}(t)}{c}\right) + \eta_{{\rm{LoS}}}, \nonumber\\
    PL^\text{NLoS}_{d_{i  \leftarrow j}}(t) &= 20\log_{10}\left(\frac{4\pi f_c d_{i  \leftarrow j}(t)}{c}\right) + \eta_{{\rm{NLoS}}}. \nonumber
\end{align}
Here, $\eta_{{\rm{LoS}}}$ and $\eta_{{\rm{NLoS}}}$ are the additional mean losses due to LoS and NLoS communication links \cite{7037248}; $c$ is the speed of light in meters per second; and $f_c$ is the carrier frequency in hertz. The corresponding probability of LoS signals from the UAVr to the ground user is given by
\begin{align}
    P^\text{LoS}_{d_{i  \leftarrow j}}(t) = \frac{1}{1 + a \exp\left(-b\left(\frac{180}{{\rm \pi }}\theta_{i  \leftarrow j} - a\right)\right)},  \nonumber
\end{align}
where $\theta_{i  \leftarrow j} = \tan^{-1}\left(\frac{h^{\rm Ur}_j(t)}{r_{i  \leftarrow j}(t)}\right)$ (in radians) is the elevation angle between the UAVr and the user, as shown in Fig.~\ref{fig: system model_3}; $a$ and $b$ are constant factors that depend on different environmental conditions (rural, urban, dense urban, etc.). For $P^\text{LoS}_{d_{i  \leftarrow j}}(t)$, the probability of NLoS signals from the UAVr to the ground user is given by $P^\text{NLoS}_{d_{i  \leftarrow j}}(t) = 1 - P^\text{LoS}_{d_{i  \leftarrow j}}(t)$.
The channel gain {$\mathbf{h}_{i \leftarrow j \leftarrow s}$} between the UAVr and the user is defined as \cite{10164260}
\begin{align}
    \label{eq: gain_UAV_user}
    \begin{split}
        {\mathbf{h}_{i\leftarrow j \leftarrow s}}(t) = & {\mathfrak{g}_{i \leftarrow j}}{{\left( \frac{4\pi {{f}_{c}}{{d}_{i\leftarrow j}(t)}}{c} \right)}^{-\frac{{{\alpha_{\rm exp} }}}{2}}} \\
        & \times {{10}^{-{\frac{P_{d_{i \leftarrow j}}^{{\rm{LoS}}}(t)\times PL_{d_{i \leftarrow j}}^{{\rm {LoS}}}(t)+P_{d_{i \leftarrow j}}^{ {\rm{NLoS}}}(t)\times PL_{d_{i \leftarrow j}}^{{\rm{NLoS}}}(t)}{20}}}}.
    \end{split}
\end{align}
where $d_{i \leftarrow j}$ is the distance between the UAVr and the user, $\mathfrak{g}_{i \leftarrow j}$ denotes the channel gain that includes both small-scale fading and the effect of antenna gain, following the stochastic modeling convention,~{and $\alpha_{\rm exp}$ denotes the path-loss exponent of the UAVr to user link.}
The average path loss of the signal from the UAVr to the ground user is computed as 
    \begin{align}
        &PL_{d_{i \leftarrow j}}^{{\rm{Avg}}}(t) = P_{d_{i \leftarrow j}}^{{\rm{LoS}}}(t)\times PL_{d_{i \leftarrow j}}^{{\rm {LoS}}}(t) + P_{d_{i \leftarrow j}}^{ {\rm{NLoS}}}(t)\times PL_{d_{i \leftarrow j}}^{{\rm{NLoS}}}(t) \notag \\
        & = \frac{\eta_{\rm LoS} - \eta_{\rm NLoS}}{{1+a\exp\left({-b\left[{{\frac{180}{{\rm \pi }}\theta_{i  \leftarrow j}}-a}\right]}\right)}} + 20{\log_{10}}\left ({{d_{i  \leftarrow j}(t)}}\right) + \beta,
    \label{Eq:09_3} 
    \end{align}
where $\beta = 20{\log_{10}}\left(\frac{4\pi f_c}{c}\right) + \eta_{{\rm NLoS}}$.
We also neglect satellite interference due to its relatively low power compared to the desired signal strength for advanced IoT devices~\cite{8489991}. 
%
Thus, the SINR for the user associated with UAVr can be defined as
\begin{align}
    \label{Eq:10_3}
    \gamma_{i \leftarrow j}(t/2) = \frac{p_{i \leftarrow j}(t)\,{\left\lVert \mathbf{h}_{i \leftarrow j \leftarrow s}(t) \right\rVert}^{2}} {B_{i \leftarrow j}\,\sigma^{2} + I_u(t)} .
\end{align}

where $I_u(t) = \sum_{\substack{k=1, k \neq j}}^{K} p_{i\leftarrow k}(t)\, 10^{PL^{\rm Avg}_{d_{i \leftarrow k}}(t)/10}$ is the aggregate interference from all UAVr $k\neq j$ on the user $i$, and $B_{i \leftarrow j}$ denotes the bandwidth allocated to the UAVr$\to$user link.
\noindent Here, $(t/2)$ marks the phase (I/II) within the slot; all instantaneous quantities are evaluated at time $t$. 
Accounting for the two-phase schedule is handled later in the rate expression.
To guarantee the QoS requirement, we impose {$\gamma_{i\leftarrow j} \ge \gamma_{\rm th}$,} where ${\gamma }_{\mathrm{th}}$ is the predefined SINR threshold for successful transmission of the signal. 
The hovering power of UAVr as a function of its operational altitude \( h^{\rm Ur}_j \) is calculated as 
\begin{align}
    p_j^{\rm Hov} = p_0(1+\Delta)e^{{{\varepsilon h^{\rm Ur}_j}/{2}}},
    \label{Eq:01_3}
\end{align}
where \( p_0 \) is the power consumed by the serving UAVr during hovering; \( \Delta \) and \( \varepsilon \) are constants; and \( h^{\rm{Ur}}_j \) is the altitude of the UAVr.
The exponential form adopted from~\cite{9946428} originates from an aerodynamic drag model, which inherently reflects altitude-related resistance and static environmental factors such as air density.
The hovering altitude of the UAVr corresponding to its hovering power, derived from~(\ref{Eq:01_3}), is given by
\begin{align}
    h^{\rm Ur}_j = \frac{2}{\varepsilon}{\ln}\left(\frac{{p_j^{\rm Hov}} }{{p_0(1+\Delta)}}\right).
    \label{Eq:02_3}
\end{align}

Since the altitude of UAVr is fixed at a regulatory-compliant 100 \,m in all schemes, the hover power acts as a constant offset in total energy consumption. Therefore, this constant does not alter the results of relative performance or the conclusions drawn from our comparative analysis.

    \begin{figure}[!ht]
	   \centering
	   \includegraphics[width=0.99\columnwidth]{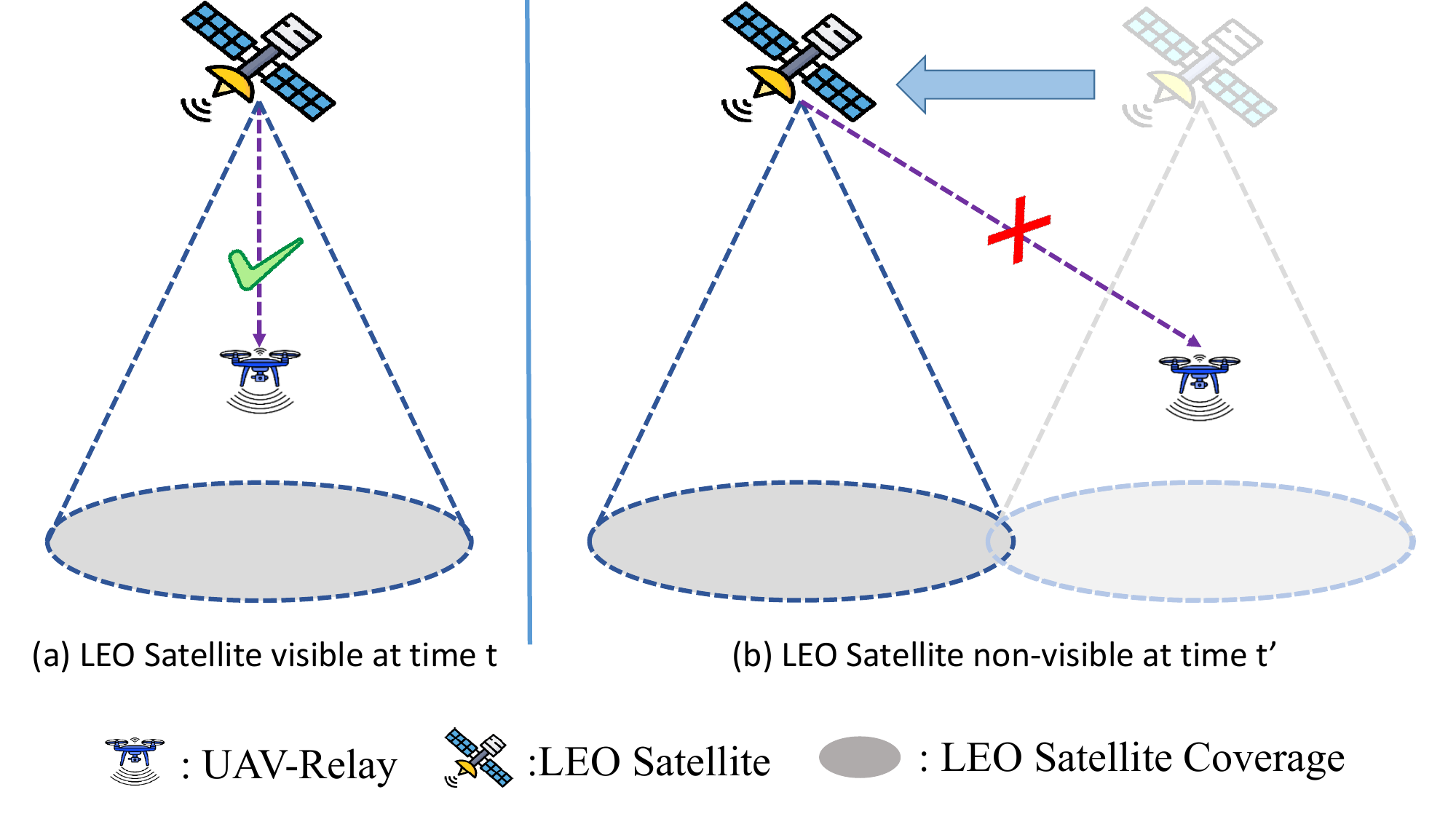}
       \caption{\small Illustration of LEO satellite visibility states. LEO satellites travel at high orbital speeds (typically 7.8\,km/s~\cite{9681631}), causing rapid changes in visibility to ground or aerial nodes.}
	   \label{fig: Satellite Visibility}
    \end{figure}
    
    %
   \subsubsection{ LEO Satellite to UAVr and User }  

  Due to the high-speed orbits of LEO satellites and their limited visibility periods to multiple UAVr, a satellite constellation is essential to ensure continuous backhauling capabilities. Orbital pass periods during which LEO satellites are directly visible to multiple UAVr are a critical factor influencing backhaul efficiency \cite{zong2019optimal}. Since LEO satellites are not geostationary, their movements must be analyzed before evaluating the system's performance. We assume that each UAVr is continuously associated with an LEO satellite. For computational convenience regarding satellite-Earth geometries, we model the satellite following a circular orbit. 
  The connection between a UAVr and an LEO satellite is represented by a visibility variable \( V_{j  \leftarrow s}(t) \) at the time interval \( t \), as defined in \cite{9775682}.
    \begin{equation}
     \label{eq: Satellite_visibility}
        {V }_{j\leftarrow s}(t)=\left\{ \begin{array}{c} 1\ \mathrm{\ }\mathrm{\ }\mathrm{if}\ {\mathrm{cos} \left(\frac{\mathrm{2}\mathrm{\piup }\mathrm{t}}{T_{\mathrm{s}}}-{\theta }_p\right)\ge \ }\frac{R^2_E+{r_{\rm EC}^2}-d^2_{\mathrm{SR}}}{2R_E{r_{\rm EC}}} \\ \ 0\ \ \ \ \ \ \ \ \ \ \ \ \ \ \ \ \ \ \mathrm{otherwise}.  \ \ \ \ \ \ \ \ \ \ \ \end{array} \right.
    \end{equation}  
    {To avoid potential ambiguity, satellite visibility is evaluated at the beginning of each time slot and assumed constant over the slot duration.}
    If the LEO satellite is visible at time \( t \), then \( V_{j  \leftarrow s}(t) = 1 \); otherwise, \( V_{j  \leftarrow s}(t) = 0 \) \cite{10464433}, as shown in Fig.~\ref{fig: Satellite Visibility}.  Similarly, the visibility variable from the user to the LEO satellite, \( V_{i \leftarrow s}(t) \), is determined in the same manner as in equation~\eqref{eq: Satellite_visibility}.
    Here, \( d_{\rm SR} \geq d_{j  \leftarrow s} \) represents the slant range corresponding to the minimum elevation angle, \( R_E \) is the radius of the Earth, \( r_{\rm EC} \) is the distance between the LEO satellite and the Earth's center, \( \theta_p \) is the polar angle of the UAVr, and \( T_s \) is the orbital period of the satellite. 
    It is assumed that the handover between LEO satellites and multiple UAVr occurs through advanced handover (HO) schemes, such as guaranteed and prioritized HO; therefore, frequent handovers due to LEO satellite movement do not affect data transmission reliability \cite{4062836}.
    We assume that the links between the LEO satellite and the UAVr or the user follow the shadowed-Rician fading (SRF) model.
   The channel gain from the satellite to the UAVr is expressed as \cite{10164260}:
    \begin{align}
        \hbar_{j\leftarrow s} = \sqrt{\mathfrak{g}^{\rm avg} d_{j\leftarrow s}^{-\alpha_{\rm exp}}},
    \end{align}
    where the term $\mathfrak{g}^{\text{avg}}$ represents the average antenna gain between the LEO satellite and the UAVr/user and captures the overall gain and fading behavior in the satellite communication links.
    \( d_{j  \leftarrow s} \) is the distance between the satellite and the \( j^{\text{th}} \) UAVr, and \( \alpha_{\rm exp}^2 \) is the path loss exponent from the satellite to the UAVr. 
    The term \( \mathfrak{g}^{\rm {avg}}\sim\text{SR}(\wp, \Im, \varnothing) \) represents the SRF component with an average power of the direct signal \( \wp \), half of the average power of the scattered portion \( \Im \), and the Nakagami-m fading component \( \varnothing \).

    The instantaneous SNR for the satellite-to-UAVr link, based on the satellite visibility criterion in~\eqref{eq: Satellite_visibility}, is given as
    \begin{equation}
       \label{eq: Satellite_UAV}
       \gamma_{j \leftarrow s}(t/2) = 
       ({V_{j \leftarrow s}(t)\, P_s^{\rm tx}\, 
             \lvert \hbar_{j\leftarrow s}(t) \rvert^{2}})/({B_{j \leftarrow s}\, \sigma^2} \;) .
    \end{equation}
    \noindent The same convention as in Eq.~(\ref{Eq:10_3}): $(t/2)$ is a phase indicator; instantaneous terms are taken at time $t$. 
    %
    Similarly, the channel coefficient for the satellite-to-user link is modeled as
\[
\mathbf{h}_{i \leftarrow s}(t) = \sqrt{{\mathfrak{g}}^{\rm avg}\, d_{i\leftarrow s}(t)^{-\alpha_{\rm exp}}},
\]
and the corresponding SINR is given by
\begin{equation}
    \label{eq: Satellite_User}
    \gamma_{i \leftarrow s}(t/2)
    = V_{i\leftarrow s}(t)\,
    \frac{P_s^{\rm tx}\, \bigl\lVert \mathbf{h}_{i \leftarrow s}(t) \bigr\rVert^{2}}
         {B_{i \leftarrow s}\, \sigma^2 + I_u(t)} \, .
\end{equation}
Here, $P_s^{\rm tx}$ is the satellite transmission power, $B_{i\leftarrow s}$ denotes the bandwidth allocated to the satellite-to-user link during its active phase, and $(t/2)$ marks the phase (I/II) within the slot; instantaneous quantities are evaluated at time $t$. 
    It is noted that the satellite–UAVr and satellite–user distances vary with orbital motion across time intervals, and this variation is incorporated into the SINR calculation through \eqref{eq: Satellite_visibility}, \eqref{eq: Satellite_UAV}, and \eqref{eq: Satellite_User}. The UAVr positions are assumed to be constant within a time slot for tractability.


    
\subsubsection{ GBS to User }  
   The user is assumed to experience independent Rayleigh fading when associated with the GBS. In each time $t$, the channel coefficient is: 
   \begin{align}
       \mathbf{h}_{i \leftarrow G}\left(t\right)={ \mathfrak{g}_{i \leftarrow G}\left(r_{i \leftarrow G}\left(t\right)\right)}^{-\alpha_{\rm exp} }.
       \label{Eq:20_3}
    \end{align}
    Here, \( \mathfrak{g}_{i \leftarrow G} \) denotes the channel gain that captures both the antenna gain and small-scale Rayleigh fading, modeled as a complex Gaussian random variable, i.e., \( \mathfrak{g}_{i \leftarrow G} \sim \mathcal{CN}(0,1) \).
    The variable \( r_{i \leftarrow G} \) represents the horizontal distance between the user and the GBS. Thus, the SNR for the user associated with the GBS can be defined as follows:
    \begin{align}
        {\gamma }_{i \leftarrow G}\left(t\right)=\frac{P_{\!G}^{\rm{tr}}{\left\|\mathbf{h}_{i \leftarrow G}\left(t\right)\right\|}^2}{B_{i \leftarrow G}{\sigma }^2}.
        \label{Eq:21_3}
    \end{align}
    The GBS transmits at a fixed power \( P_{\!G}^{\,\mathrm{tr}} \)  for terrestrial communications. To maximize total network capacity while ensuring fairness, UAVr/satellite links are utilized only when the GBS cannot accommodate additional users. We define \( \delta_{i \leftarrow G}(t) \) as a binary indicator for the user at time \( t \), indicating whether the user meets the load conditions under GBS coverage:
\begin{equation}
    \delta_{i \leftarrow G} = 
    \begin{cases}
      1, & \text{if} ~~~~ |\Omega_{G}| \leq \omega_{G}^{\text{max}} \\
      0, & ~~~~~~~\text{otherwise.}
    \end{cases}
    \label{Eq:gbs_indicator}
\end{equation}
Here, $\Omega_{G}$ and $\omega_{G}^{\text{max}}$ show the current users associated with the GBS and the maximum user association capacity of the GBS.
   \begin{figure}[!ht]
       \vspace{-5pt}
	   \centering
	   \includegraphics[width=.96\columnwidth]{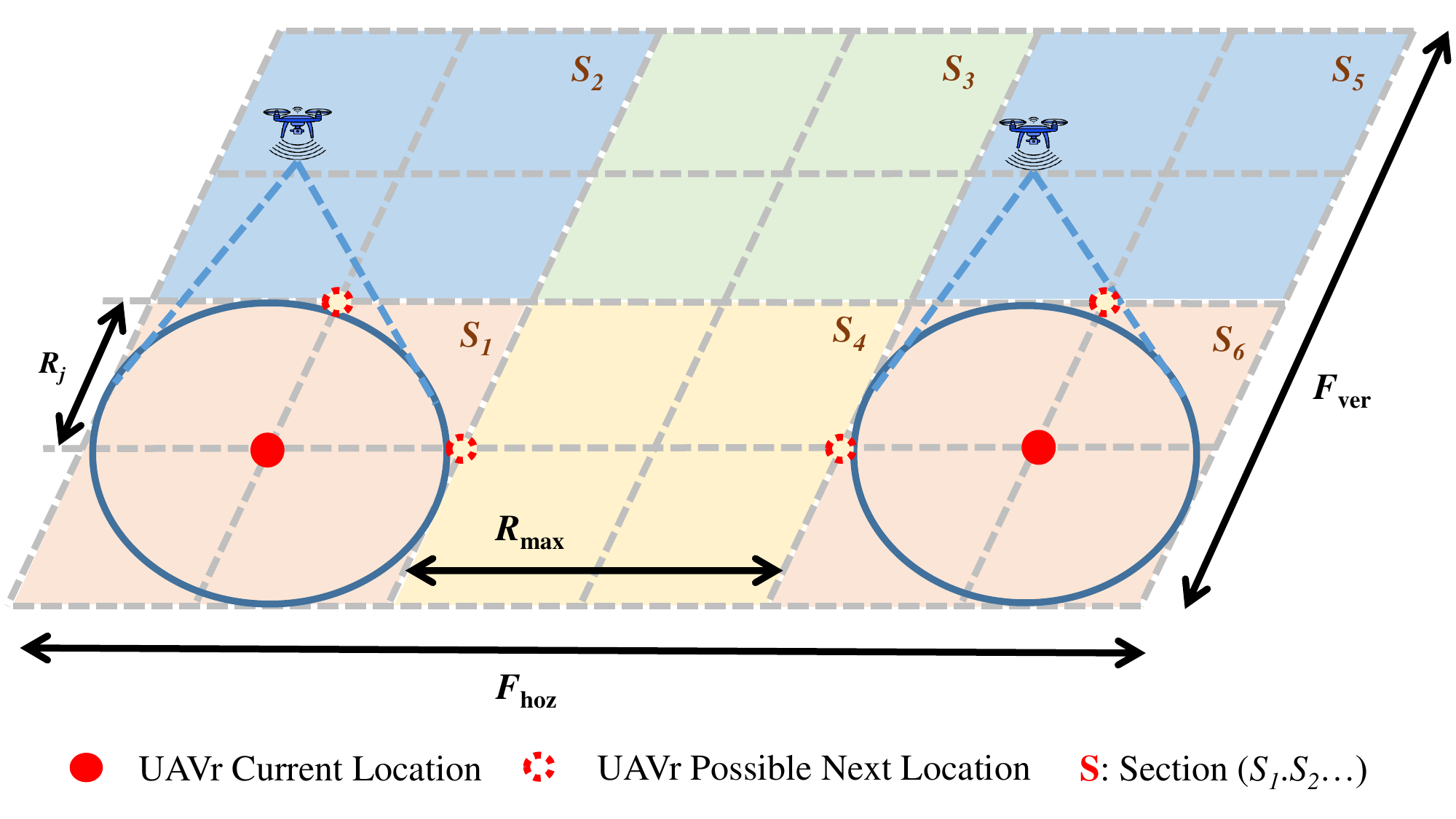}
	   \vspace{-5pt}
       \caption{\small Illustration of the UAVr mobility model. Each UAVr provides circular coverage with radius $R_j$ (blue disk), while the rectangular grid represents logical subdivisions of the GBS area for hotspot tracking and UAVr displacement planning. The grid does not imply rectangular coverage but is used for section-wise traffic analysis and decision-making. The UAVr selects its next location within its circular reach based on traffic demand.}
	   \label{fig: UAV-mobility}
    \end{figure}
    %
    

\section{ UAVr Deployment and Signal Combining}
    \label{Adaptive UAVr Placement and Signal Combining}
    When GBS cannot meet the QoS requirements of network users, the proposed AMUD approach first involves intelligently deploying multiple UAVr over the hotspot area and then collaboratively serving the users with the help of the LEO satellite. In this section, we first analyze the adaptive placement of multiple UAVr, followed by a signal-combining analysis of the LEO satellite and multiple UAVr.

\subsection{UAVr Coverage Analysis}
    \label{UAVr coverage analysis}
   In this subsection, we discuss the deployment of multiple UAVr through the proposed AMUD framework. As illustrated in Fig.~\ref{fig: UAV-mobility}, the coverage area of a UAVr is considered circular with a radius of \( R_j \). Let the terrestrial area under GBS coverage be denoted as \( A \), with dimensions \( F_{\rm hoz} \times F_{\rm ver} \). The area \( A \) is subdivided into smaller sections that represent the coverage of multiple UAVr, denoted \( S_A \), where \( S_A < A \). A key objective of the proposed AMUD framework is to deploy multiple UAVr in sectors experiencing heavy traffic. Furthermore, intelligent user association must be performed to ensure that each user connects to the most relevant access point. 
   A user is considered to be within coverage if it is located within a distance of at most \( R_j \) from the center of the coverage region, i.e.,
    \begin{align}
       \delta_{i \leftarrow j}r^2_{i  \leftarrow j}(t)\le R^2_j(t).
        \label{Eq:25_3}
    \end{align}
    We further modify the above equation as follows~\cite{7918510}, 
    \begin{align}
        r^2_{i  \leftarrow j}(t)\le R^2_j(t) + M(1-\delta_{i \leftarrow j}).
        \label{Eq:26_3}
    \end{align}
    Here, \( M \) is a large constant, indicating that the user is far from the access point when \( \delta_{i \leftarrow j} = 0 \). The equation above signifies user association when \( \delta_{i \leftarrow j} = 1 \). Thus, \( \delta_{i \leftarrow j} \in \{0,1\} \) serves as an indicator function that represents the association of a user with an access point. Mathematically, it can be expressed as follows:
    \begin{equation}
        \delta_{i \leftarrow j} = 
        \begin{cases}
            1, \ &\text{if,} \ r^2_{i  \leftarrow j}(t)\le R^2_j(t) \\
            0, \ &\text{otherwise.} 
        \end{cases}
    \label{eq: Indicater_UAV_user}
    \end{equation}
    %
    
    In a multiple-UAVr deployment framework, the multiple UAVr traverse the entire network region while hovering at the same height. Therefore, maintaining a safe distance is crucial to prevent collisions between multiple UAVr. A collision between two UAVr occurs when their distance is \( d_{j \leftarrow j^{\prime}} = 0 \)\cite{9555387}. To mitigate this risk, the proposed AMUD framework incorporates a safe distance parameter \( R_{\max} \) among multiple UAVr, ensuring that \( d_{j \leftarrow j^{\prime}} \geq R_{\max} \), where \( j \) and \( j^{\prime} \) denote different UAVr (Fig.~\ref{fig: UAV-mobility}). 

    Please note that for tractability, UAVr repositioning within a time slot is assumed instantaneous, representing the most responsive case; in practice, repositioning delays may slightly reduce achievable gains.
    The following subsection discusses the space-time CD signal combining strategy within the proposed AMUD framework.
    %

\subsection{Cooperative Diversity Technique }
   Following the adaptive deployment of UAVr discussed in the previous subsection, we would like to explore the combined CD-based signal at the ground user within the proposed AMUD framework. The user is assumed to possess \( L \) antennas.
   In this scenario, the UAVr is retransmitting the signal from the LEO satellite to the destination nodes, the ground user \cite{labrador2009approach}. 

    Communication from a user through the LEO satellite-aided UAVr occurs as follows. The proposed AMUD approach combines the signals received by the user in two phases. The LEO satellite and UAVr signals to the ground user arrive via time division multiple access (TDMA) protocols from Phase I to Phase II, as illustrated in Fig.~\ref{fig: system model_3}.
    During Phase I, the source (i.e., LEO satellite) simultaneously transmits a signal to both the relay node (i.e., UAVr) and the destination node (i.e., ground user). Subsequently, in Phase II, using the AF protocol, the UAVr retransmits the source signal to the user while the satellite remains silent.
    Additionally, we assume that no information exchange occurs between a UAVr and a LEO satellite between Phase I and Phase II while operating in time division duplex mode; therefore, the total time slot is two \cite{hong2010cooperative}. 
    Satellite and UAVr signals are assumed to be perfectly in synchronization with the user \cite{zhao2018exact}. This synchronization is achieved through timing calibration aided by terrestrial control stations and timing advance adjustments based on device locations \cite{9099807}.
    It should be noted that the AF protocol offers a more straightforward implementation and lower complexity than DF. Recognizing this advantage, our framework leverages AF for efficient signal relaying \cite{labrador2009approach}.
    AF relays are adopted here for their low latency and simple implementation compared to DF~\cite{6280241}, though they inherently forward both signal and noise. Our SINR-weighted combining \eqref{eq: Satellite_visibility}–\eqref{eq: Satellite_User} assigns lower weights to noisier relay paths, mitigating noise amplification, while the CD between the satellite and UAVr links further suppresses residual noise.

    %
    
    Let \( \mathbf{w} = \left( w_{1_{1}}, w_{1_{2}}, \ldots, w_{1_{L}}, w_{2_{1}}, w_{2_{2}}, \ldots, w_{2_{L}} \right) \in \mathbb{C}^{2L} \) be a weighting vector used to combine \( 2L \) received signals from the LEO satellite and the UAVr as defined in equation~\eqref{Eq: receive signal}. Then, using the CD receiver with the weighting vector \( \mathbf{w}^{\dag} \), we can express the combined output \( \mathbf{w}^{\dag} \mathbf{r_{\text{s}}^{\text{Tot}}} \) to the user as,
    %
    \begin{align}
    \label{eq: weighted_MRC}
        &\mathbf{w}^{\dagger} \mathbf{r_{\text{s}}^{\text{Tot}}}=
        \underbrace{\mathbf{w}^{\dagger} {\textbf{h}} \mathbf{x}_{\text{sym}}}_{\text{Signal}} + \underbrace{\mathbf{w}^{\dagger} {{\textbf{n}}}}_{\text{Noise}}.
    \end{align}
    %
    where the superscript $(\cdot)^{\dagger}$ indicate the transpose of vector \textbf{w}.
    In the presence of complete CSI at the destination, the instantaneous SINR at the user can be expressed as
    \begin{align}
        {\gamma_{\text{\rm AF}}}(\mathbf{w})={P_{s}^{\text{\rm tx}}} \frac{ \mathbf{w}^{\dagger} R_{\rm k} \mathbf{w}}{{{\mathbf{w}}^{\dagger }} {\mathscr{R}_{\text{n}}} \mathbf{w}} 
       \label{eq: SNR_combiner} 
    \end{align}
    {
    This expression models the optimal combination on the receiver side under a fixed AF relay gain $\mathcal{G}(t)$, and does not imply any adaptive gain control on the UAVr.
    }

    \begin{proof}
        Please refer to Appendix~\ref{appendix-A_3}.
    \end{proof}
    %
         Taking the derivative of~\eqref{eq: SNR_combiner} to the weight vector $\mathbf{w}$ ~\cite{holter2002optimal}, we get the optimal beamforming weight vector in a dual-hop cooperative communication system given by  
    \begin{align}
        \mathbf{w}_{\rm opt} = c_{r} {\textbf{h}}' = c_{r} \mathscr{R}_{n}^{-1} {\textbf{h}}.
    \label{eq: optimal weight_1}
    \end{align}
    %
    Hence, the maximum SINR in a dual-hop amplify and forward communication system is given as 
    \begin{equation}
         \gamma_{\text{\rm {AF, max}}}^{\text{\rm CD}}(\mathbf{w}_{\text{\rm opt}}) = \gamma_{i \leftarrow s} + \frac{\gamma_{j \leftarrow s}\gamma_{i \leftarrow j}}{\gamma_{i \leftarrow j} + \varsigma}.
    \label{eq: SNR_max_1}     
    \end{equation}
    %
    \begin{proof}
        Please refer to Appendix~\ref{appendix-B_3}.
    \end{proof}
    
    To indicate whether the user is associated with the \( \mathbf{U}_j \)-th UAVr, the indicator function \( \delta_{i \leftarrow j} \) is modified to incorporate the QoS and coverage constraints of the user. 
   \begin{equation}
        \delta_{i \leftarrow j} = 
        \begin{cases}
            1, \ &\text{if,} \ \left(\gamma_{\text{AF, max}}^{\text{CD}}\left(t\right)\ge {\gamma }_{\mathrm{th}}\right) \mathrm{\wedge } \ ( r^2_{i  \leftarrow j}(t)\le R^2_j(t)) \\
            0, \ &\text{otherwise.} 
        \end{cases}
        \label{Eq:27_3}
    \end{equation} 
     It is assumed that each user can only connect to one UAVr at a time for fair user association, and such a constraint is written as
    \begin{align}
	   \sum_{j=1}^{K}\delta_{i \leftarrow j}=1, 
	   \label{Eq:13_3} 
    \end{align}
        
    \subsection {Total Capacity Calculation}
    In the two-phase TDMA cooperative scheme illustrated in Fig.~\ref{fig: system model_3}, Phase~I corresponds to satellite transmission and Phase~II to UAVr forwarding, with each phase occupying half of the duration of the slot. The achievable data rate obtained from \eqref{eq: SNR_max_1} based on Shannon's theorem is,
    \begin{align}
        \label{Eq:14_3}
        c^{\mathrm{CD}}_{\mathrm{AF,max}}(t)
        = ({1}/{2})\, B_{i \leftarrow j}\,
          \log_2\!\bigl(1 + \gamma_{\text{AF,max}}^{\text{CD}}(t)\bigr)\,
          \delta_{i \leftarrow j}(t) \, .
    \end{align}
    \noindent The factor ${1}/{2}$ reflects the two equal-duration phases in the schedule.
    \noindent$B_{i \leftarrow j}$ denotes the bandwidth allocated to the active downlink (e.g., UAVr$\to$user) during its assigned phase; analogous definitions apply for other links when composing $\gamma_{\text{AF, max}}^{\text{CD}}(t)$.
    According to equations~\eqref{Eq:gbs_indicator} and~\eqref{Eq:14_3}, the data transmission rate achievable by users associated with LEO satellites through UAVr is
    \begin{equation}
        C_j(t) = \sum_{i \in \Omega_j, \forall i \in \{1,2,\dots,N_{\text{U}}\}} c^{ \mathrm{CD}}_{\mathrm{AF,max}}(t),
    \label{Eq:16_3}
    \end{equation}
    where \( \Omega_j = \max(0, |\Omega_{\rm Tot}| - \omega_{G}^{\text{max}}) \) is the set of users associated with the collaboration between the UAVr and the LEO satellite, \( \Omega_{\rm Tot} \) is the total set of users in the system, \( \omega_{G}^{\text{max}} \) is the maximum user association capacity of the GBS and \( N_{\rm U} \) is the number of users collaboratively between the UAVr and LEO satellite. The achievable data rate for a user associated with the GBS, as given by~\eqref{Eq:21_3}, is
    \begin{align}
        c_{i \leftarrow G}(t) = B_{i \leftarrow G} \log_2(1+\gamma_{i \leftarrow G}(t))\cdot \delta_{i \leftarrow G}(t),
        \label{Eq:22_3} 
    \end{align}
    where \(B_{i \leftarrow G}\) is the bandwidth (MHz) allocated to the GBS-associated user. The data transmission rate of GBS~\eqref{Eq:22_3} serving its associated users is calculated as follows.
    \begin{equation}
        C_G(t) = \sum_{i \in \Omega_G, \forall i \in \{1,2,\dots,N_{\text{G}}\}} c_{i \leftarrow G}(t),
        \label{Eq:23_3}
    \end{equation}
    where \(\Omega_{G}\) is the set of users associated with GBS and $N_{\rm G}$ is the number of users served by GBS. 
    From~\eqref{Eq:16_3} and~\eqref{Eq:23_3}, the total capacity of the network is given as
   \begin{align}
         C^{\text{Tot}}(t) = \sum_{{j}=1}^{K}~ C_{j}(t) + C_G(t).
        \label{Eq:24_3} 
    \end{align}

   The total power consumption of the downlink system, including the communication power of the UAVr, the hovering power of the UAVr, the transmission power of the satellite, and the transmission power of the GBS, is given by 
\begin{equation}
    p^{\text{Total}}(t) = \underbrace{\sum_{j=1}^{K}\!\left(\sum_{i=1}^{N_{U}} p_{i \leftarrow j}(t) + p_{j}^{\text{Hov}}(t)\right)}_{\text{\scriptsize UAVr Power (Comm.\,+\,Hover)}}
    + \underbrace{P_{s}^{\text{tx}}(t)}_{\text{\scriptsize LEO Power}}
    + \underbrace{\sum_{i=1}^{N_{G}} P_{G}^{\text{tr}}(t)}_{\text{\scriptsize GBS Power}}
\label{Eq:15_3}
\end{equation}

    To clarify the GBS power model, we assume a fixed transmit power spectral density (PSD), resulting in a constant transmit power \( P_{\!G}^{\,\mathrm{tr}} \) for each user's allocated bandwidth. Consequently, the total power consumption of the GBS, denoted as \( P_{\!G}^{\,\mathrm{tr}} \)(t), increases linearly with the number of users associated with the time interval $t$. This relationship is explicitly captured in \ref{Eq:15_3}, ensuring that the model accounts for both operational overhead and dynamic user load.
    From~\eqref{Eq:24_3} and~\eqref{Eq:15_3}, we can derive the total energy efficiency as
    \begin{align}
        E^{\rm Tot}(t)= \frac{C^{\text{Tot}}(t)}{p^{\rm Total}(t)} 
    \label{eq: energy_efficiency}
    \end{align}
    %
    
%
We introduce a capacity-fairness constraint to ensure user fairness in a multi-access-point framework. We use Jain's fairness index (JFI) \cite{jain1984quantitative} metrics to measure performance and provide fair communication services to different users in the deployed UAVr-aided hybrid space-terrestrial network. Fairness among users is demonstrated using JFI. This index, denoted by \( \xi \), serves as a fairness metric and is defined as
    \begin{align}
        \xi =\frac{{{\left( \sum\limits_{i=1}^{{{N}_{U}}}{c_{\text{AF,max}}^{\text{CD}}+\sum\limits_{i=1}^{{{N}_{G}}}{{c_{i \leftarrow G}}}} \right)}^{2}}}{N\left( \sum\limits_{i=1}^{{{N}_{U}}}{c{{_{\text{AF,max}}^{\text{CD}}}^{2}}} \right)+N\left( \sum\limits_{i=1}^{{{N}_{G}}}{{c_{i \leftarrow G}}^{2}} \right)}. 
	    \label{Eq:28_3} 
    \end{align} 
The index value \( \xi \) is in the range \([0, 1]\). A value of one indicates equal data rates among different network users. A higher value of the fairness index corresponds to smaller differences between the total data rates allocated to the users. In the next section, we formulate a network capacity maximization problem that incorporates the constraints introduced here.
%
%
\section{Energy Efficiency Maximization Formulation} 
\label{sec: Problem Formulation_3}
In this work, we improve the network's total energy efficiency using the proposed AMUD approach with adaptive UAVr placement and signal combining from LEO satellites and UAVr. Due to the limited battery capacity of the UAVr that affects hovering time and communication efficiency, energy efficiency is critical, especially in disaster and temporary communication scenarios~\cite{9946428}. Consequently, we jointly optimize user association and transmit power allocation. The formulation of the problem, subject to the minimum data rate constraints~\eqref{Eq:14_3} and~\eqref{Eq:22_3}, the association of users, and the considerations of fairness, is given in~\eqref{eq: energy_efficiency}.
\begin{equation}
    \max_{\delta_{i \leftarrow j} (t), p_{i  \leftarrow j} (t)} \sum\limits_{i=1}^{T} E^{\rm Tot}(t) \; 
    \label{Eq:obj_P1_3}
\end{equation}

The multi-objective problem is scalarized using an $\alpha$-weighted sum of the normalized total capacity and normalized total energy efficiency, with fairness and QoS constraints enforced as hard constraints.
\vspace{-6.5pt}
\begin{subequations}
    \renewcommand{\theequation}{\theparentequation\alph{equation}}
    \begin{align}
    \hspace{3em}\text{subject to}&~\eqref{Eq:gbs_indicator},~\eqref{eq: SNR_max_1},~\eqref{Eq:27_3},~\eqref{Eq:13_3},~\eqref{Eq:14_3},~\eqref{Eq:22_3},\nonumber\\
        0 & \leq p_{i  \leftarrow j} \leq p_{\max}, 
        \label{Eq:30_3} \\
        0 & \leq |\Omega_{j}| \leq \omega_{j}^{\mathrm{max}}, 
        \label{Eq:31_3} \\
        0 & \leq |\Omega_{G}| \leq \omega_{G}^{\mathrm{max}}, 
        \label{Eq:32_3} \\
        &~r^2_{i  \leftarrow j}(t)\le {R^2_{j}(t)}, \notag\\
        &~{\delta}_{i \leftarrow j}\in \{ 0,1 \} , \ \forall i\in {\Omega_j}, 
        \label{Eq:33_3} \\
        &~{\delta}_{i \leftarrow G}(t) + \delta_{i \leftarrow j}(t) = 1  \ \forall i,
        \label{Eq:new_constraint} \\
        &~\xi\ge\xi_{\rm th}. 
        \label{Eq:34_3} \\
        &~\mathbf{U}_j(t), \forall t \in \{1, T\}. 
        \label{Eq:35_3}
    \end{align}
\end{subequations}
    
The constraint ~\eqref{Eq:30_3} limits the transmit power at each UAVr serving associated users within its coverage, influenced by factors such as the altitude of the UAVr, the service duration,  and the number of users. Constraints~\eqref{Eq:31_3} and~\eqref{Eq:32_3} restrict user associations with the UAVr and GBS, respectively. The system identifies candidate configurations, including UAVr locations \( \mathbf{U}_j^{\rm 2D} \) and optimal coverage radius \( R_j \), based on relationships defined in equation~\eqref{Eq:33_3}, derived from equation~\eqref{Eq:27_3}. The constraint~\eqref{Eq:new_constraint} ensures that each user is associated with either the GBS or the UAVr. 
The constraint~\eqref{Eq:34_3} ensures a fair distribution of users among multiple UAVr and GBS, with the fairness constraint $\xi \ge \xi_{\mathrm{th}}$ directly applied to the feasible set, thereby maintaining fairness independently of the scalarized objective.
The constraint~\eqref{Eq:35_3} specifies the initial and final location based on $t$.

The maximum path loss of each user, \( PL_{d_{i  \leftarrow j}}^{\rm{max}} \), is a fixed value in the system. Using this maximum path loss, we calculate the optimal angle \( \theta _{j}^{\rm opt} \) by solving the non-linear partial differential equation \( \frac{\partial r_{i  \leftarrow j}}{\partial \theta _{j}^{\rm opt}} = 0 \) from the equation~\eqref{Eq:09_3}, as expressed in \cite{al2014optimal}.
    \begin{align} 
	   {\frac{{\rm \pi }\tan \theta_{j} }{9\ln \left(10\right)} +\frac{abA\exp \left(- b\left[\frac{180}{{\rm \pi }} \theta_{j} -a\right]\right)}{\left(a\exp 
          \left(-b\left[\frac{180}{{\rm \pi }} \theta_{j} -a\right]\right)+1\right)^{2} } =0}.
	   \label{Eq:17_3} 
    \end{align} 
   With the optimal angle obtained \( \theta _{j}^{\rm opt} \), if the altitude of the \( \mathbf{U}_j \)-th UAVr, denoted as \( h^{\rm Ur}_j \), is given, the corresponding coverage radius \( R_j \) can be derived using the relationship.
    \begin{align} 
	   \theta_{j}=\tan^{-1}\left(h^{\rm Ur}_j/R_j\right),
	   \label{Eq:18_3} 
    \end{align} 
    and \( \theta_{j} \) is set to the optimal angle \( \theta _{j}^{\rm opt} \). Since the maximum altitude \( h_{\max} \) is a predefined constraint, we can use the equation~\eqref{Eq:18_3} to determine the maximum coverage radius \( R_{\max} \) provided by a UAVr. Additionally, the maximum allowable LoS distance (Euclidean distance) for the \( \mathbf{U}_j \)-th UAVr will be
    \begin{align} 
	   d_{\max} = R_{\max}\sec\theta _{j}^{{\rm opt}}.
	   \label{Eq:19_3} 
    \end{align} 
   %
   \subsection{The NP-Hardness}
   \begin{equation}
        \max_{p_{i  \leftarrow j}(t)} E^{\text{Tot}} \;
        \label{Eq:obj_P2_3}
    \end{equation} 
    \text{s.t.}~\eqref{eq: SNR_max_1} --~\eqref{Eq:14_3}, ~\eqref{Eq:30_3},~\eqref{Eq:31_3},~\eqref{Eq:32_3},~\eqref{Eq:33_3}, ~\eqref{Eq:34_3}~\eqref{Eq:35_3}. \\
    %
    Note: The constraints discussed are~\eqref{Eq:obj_P1_3}. Using the above statement, we deduce the following theorem. 
    \begin{thm}\label{thm:1}
    	The original problem~\eqref{Eq:obj_P1_3} is NP-hard.
    \end{thm}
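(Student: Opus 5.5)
The plan is to prove NP-hardness of \eqref{Eq:obj_P1_3} by restriction. We exhibit a special case of the problem that is already NP-hard and note that any algorithm solving \eqref{Eq:obj_P1_3} in general would solve that special case. The natural candidate is the reduced problem \eqref{Eq:obj_P2_3}, or an even simpler association-only subproblem, after fixing most variables. Concretely, we freeze the UAVr positions $\mathbf{U}_j(t)$ for all $t$, so that \eqref{Eq:35_3} is trivially met and the coverage relations \eqref{Eq:33_3} become fixed data. We take a single time slot $T=1$ and drop the fairness constraint by setting $\xi_{\rm th}=0$. We also fix every UAVr transmit power to $p_{\max}$, or otherwise make the power constraint \eqref{Eq:30_3} inactive. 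What remains is a binary assignment of users to access points: choose $\delta_{i\leftarrow j},\delta_{i\leftarrow G}\in\{0,1\}$ subject to \eqref{Eq:13_3}, \eqref{Eq:new_constraint} and the capacity limits \eqref{Eq:31_3}--\eqref{Eq:32_3}.

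With powers fixed, the denominator $p^{\rm Total}$ in \eqref{eq: energy_efficiency} is constant, or depends only on how many users the GBS serves. If we additionally choose $P_G^{\rm tr}$ negligible, or equalize it to per-user UAVr power, then maximizing $E^{\rm Tot}$ is equivalent to maximizing $C^{\rm Tot}$ in \eqref{Eq:24_3}. That objective is a sum of per-user rates $c_{ij}$ determined by which access point each user is assigned to.

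We then reduce from the multiple knapsack problem, or equivalently the generalized assignment problem (GAP), which is strongly NP-hard. Given an instance with items $i$, profits $v_{ij}$, and bins $j$ with capacities, we build one user per item and one UAVr per bin. Each UAVr's capacity $\omega_j^{\max}$ encodes the bin capacity, and the GBS acts as a dummy bin of zero or limited profit. Placements and channel gains are chosen so that the rate $c^{\rm CD}_{\rm AF,max}$ from \eqref{Eq:14_3}, via \eqref{eq: SNR_max_1}, equals the prescribed profit, up to scaling. The values of $\gamma_{\rm th}$ and $R_j$ in \eqref{Eq:27_3} are set so that exactly the allowed item–bin pairs are feasible.

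The main obstacle is realizability: rates here come from the physical model rather than arbitrary numbers. For cardinality-type capacities it suffices to reduce from the unit-weight case with profits depending on both item and bin. The GAP weights can alternatively be encoded through interference $I_u$, but we avoid that route. The first approach to try is to give each user's channel coefficients $\mathfrak{g}$ freely, since the paper treats them as stochastic gains and any positive realization is admissible. That makes each SINR, and hence each $c_{ij}$, an arbitrary positive rational up to polynomial-size encoding. This gives a polynomial-time reduction.

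Unit-weight assignment with capacities alone is solvable by matching, however. NP-hardness must therefore come from a genuinely coupling constraint. We will use the interference term $I_u(t)$ in \eqref{Eq:10_3}, which couples the UAVr powers, together with the nonconvex fractional objective. The fallback reduction is then to the known NP-hard sum-rate maximization under interference (power control for MISO/interference channels, Luo–Zhang). With associations fixed, \eqref{Eq:obj_P2_3} with $K$ interfering UAVr links contains that problem as a special case when the satellite terms are switched off ($V_{i\leftarrow s}=V_{j\leftarrow s}=0$ removes the direct term and leaves an interference-limited rate sum). Either route establishes that the restricted problem, and hence \eqref{Eq:obj_P1_3}, is NP-hard.
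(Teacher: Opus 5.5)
Your fallback has the same skeleton as the paper's proof: fix the association, then invoke a known hardness result for the remaining power-control problem. However, the embedding you describe fails, and the association route adds nothing. As you note yourself, with powers fixed that route is a capacitated (linear-fractional) assignment problem, which is polynomial. Reducing \emph{from} it proves no hardness, so the closing ``either route'' is unjustified.

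The fallback has a concrete error. Setting $V_{j\leftarrow s}=0$ forces $\gamma_{j\leftarrow s}=0$ in \eqref{eq: Satellite_UAV}, which kills the relay term $\gamma_{j\leftarrow s}\gamma_{i\leftarrow j}/(\gamma_{i\leftarrow j}+\varsigma)$ in \eqref{eq: SNR_max_1}. With $V_{i\leftarrow s}=0$ as well, every UAVr-served user has $\gamma^{\rm CD}_{\rm AF,max}=0$ whatever the powers, and the powers then enter only the denominator of \eqref{eq: energy_efficiency}. The restricted problem is therefore trivial (or infeasible when $\gamma_{\rm th}>0$), not an interference-limited sum-rate problem. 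Since the UAVr only amplifies the satellite signal, you may switch off at most the direct link $V_{i\leftarrow s}$. Also, fixing $\delta$ is a legitimate restriction only if the instance data force it, e.g., each user lies in exactly one coverage disk and $\omega_G^{\max}=0$.

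Even after that repair, ``contains the Luo--Zhang problem as a special case'' is not literally true, and each mismatch needs its own argument. The objective is the ratio $C^{\rm Tot}/p^{\rm Total}$, whose denominator contains $\sum p_{i\leftarrow j}$; one way out is a dominating constant hover power plus a gap-preserving version of the reduction. The per-user utility is $\tfrac12 B\log_2\bigl(1+\gamma_{j\leftarrow s}x/(x+\varsigma)\bigr)$ in the UAVr-link SINR $x$, which saturates, so the hardness construction must be re-verified for it. The cross gains in $I_u$ are dictated by the geometric path loss \eqref{Eq:09_3} and are not free. Your freedom in $\mathfrak{g}_{i\leftarrow j}$ touches only the desired link, so realizing the required interference pattern needs a geometric construction.

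More fundamentally, $I_u$ in \eqref{Eq:10_3} is built from $p_{i\leftarrow k}$, the power UAVr $k$ sends to user $i$. Under \eqref{Eq:13_3} and the one-user-per-slot TDMA assumption, these variables are either inactive or only hurt the objective, so they vanish at the optimum. With the association fixed, the problem then splits into one concave-over-affine ratio per slot, which Dinkelbach's method solves globally. The coupling your reduction relies on is therefore absent from the model as written. You must either explicitly adopt a concurrent-transmission model in which $I_u$ is leakage from each UAVr's own active link, or locate the hardness in a constraint that genuinely couples users (e.g., the fairness constraint \eqref{Eq:34_3} over binary associations).

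The paper's own proof does not supply this step either. \cite{7070670} proves hardness of the \emph{joint} association--power problem under max--min fairness, not of power control with the association fixed. The paper itself concedes right after the proof that fixed-association power optimization is ``significantly more manageable.''
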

    \begin{proof}
    By relaxing the constraints of user association and assuming that all multiple UAVr serve a fixed set of users, problem~\eqref{Eq:obj_P2_3} has been proved NP-hard, as shown in \cite{7070670}. Thus, the NP-hardness of the problem~\eqref{Eq:obj_P2_3} implies that the original problem in equation~\eqref{Eq:obj_P1_3} is also NP-hard.
    The NP-hardness of the original problem in equation~(33) follows by reduction to a known NP-hard problem~\cite{7070670}, with the relaxed form in equation~(38) preserving the core combinatorial difficulty and guiding the heuristic design in Section~V.
    \end{proof}

However, optimizing transmission power with a fixed user association becomes significantly more manageable \cite{7070670}. Therefore, in the following sections, we introduce an algorithmic approach based on AMUD to optimize the transmission power of UAVr \( p_{i  \leftarrow j} \) by fixing the association variable \( \delta_{i \leftarrow j} \).
\begin{figure}[!ht]
    \vspace{-10 pt}
	\centering
	\includegraphics[width=\columnwidth]{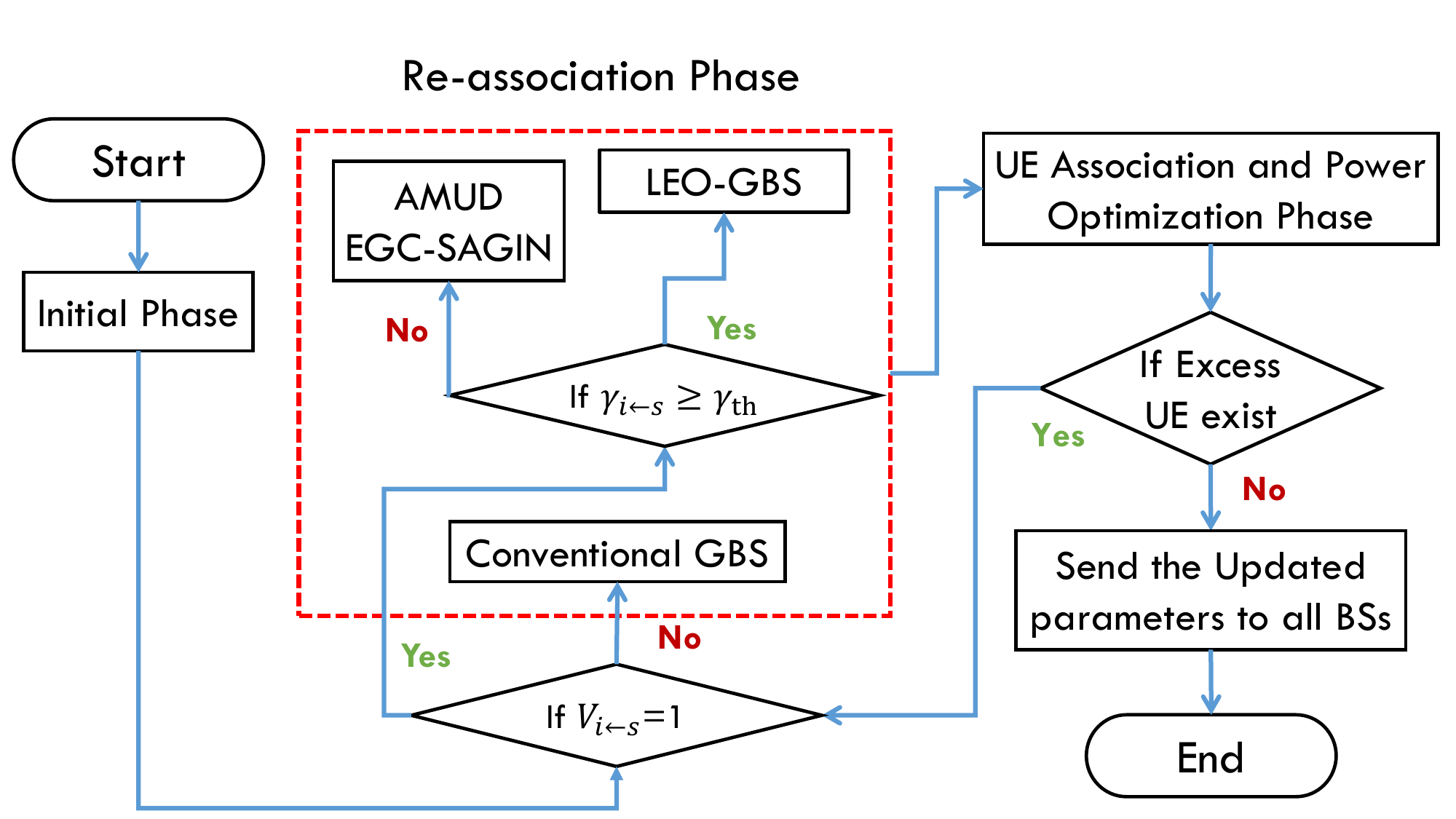}
	\caption{ \small The flowchart of the AMUD framework.}
	\label{The flowchart of the AMUD}
\end{figure}
%
%

\begin{figure*}[!ht]
	%
	\centering	
    \vspace{-8 pt}
	\hspace{-10 pt}
	\subfigure[]{
		\label{fig:fig2:a}               
		\includegraphics[width=.25\textwidth]{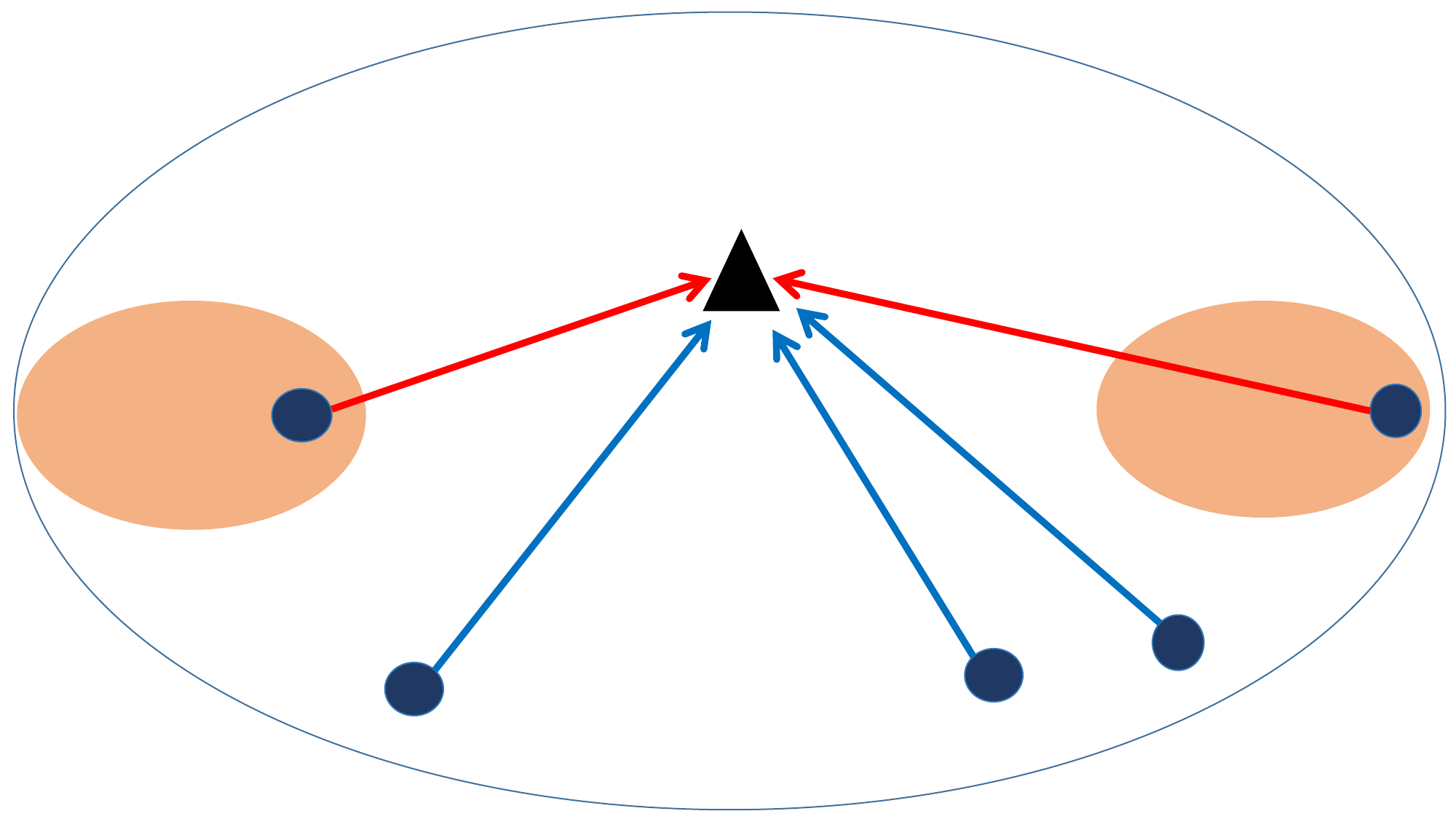}}\hspace{1.1em}%
	\subfigure[]{
		\label{fig:fig2:b}               
		\includegraphics[width=.25\textwidth]{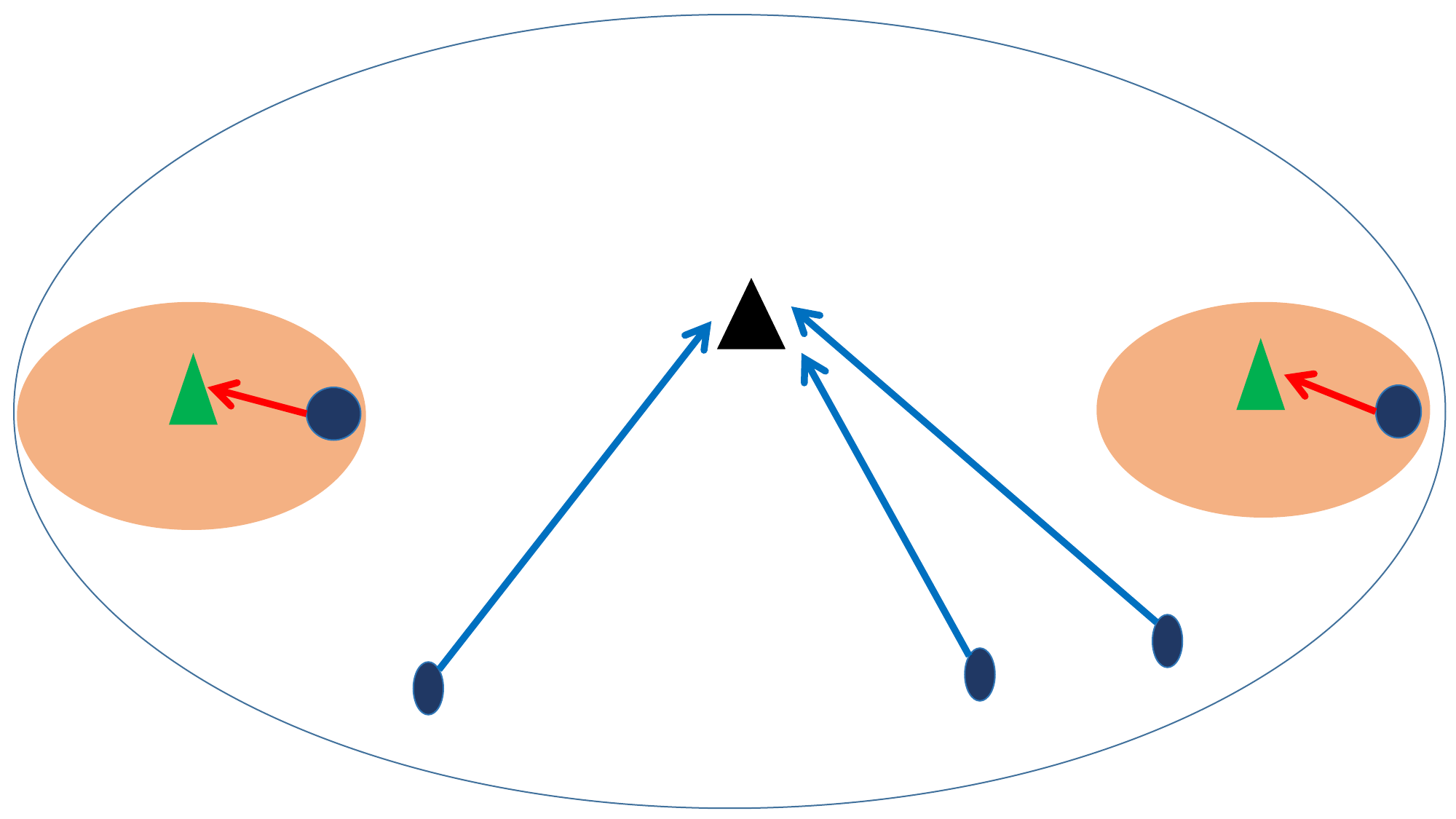}}\hspace{1em}%
	\subfigure[]{
		\label{fig:fig2:c}               
		\includegraphics[width=.35\textwidth]{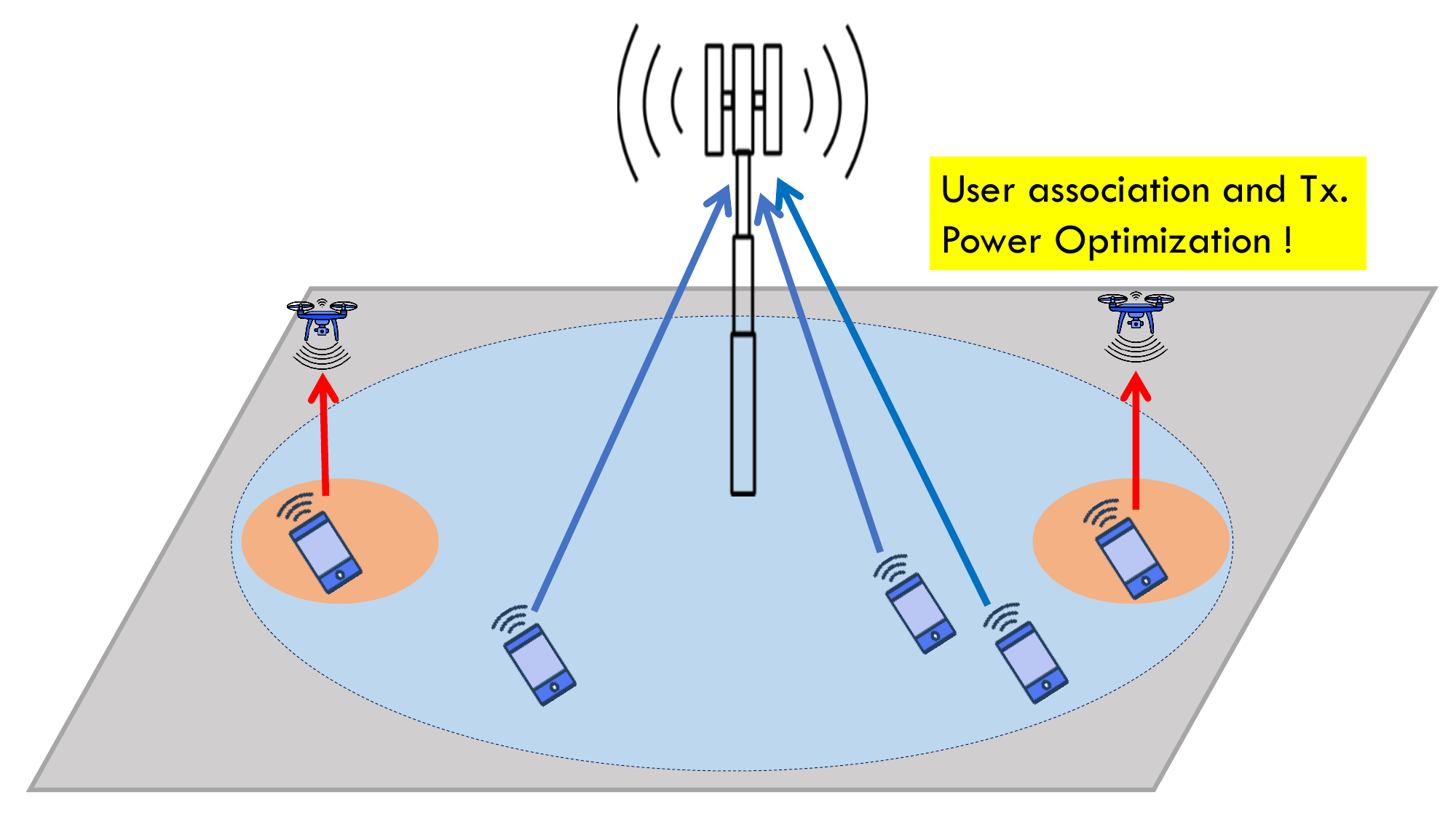}}%
	\caption{ \small The AMUD framework is visualized in $2$D and $3$D, showcasing phases: (a) Initial phase, (b) Re-association phase, and (c) User association and power optimization phase. Users are depicted as blue dots, GBS centers as black triangles, UAVr centers as green triangles, hotspots as orange-shaded areas, GBS coverage as blue circles, and UAVr coverage as orange-shaded circles. Arrows indicate user associations. For small-scale instances where exhaustive search is computationally feasible, the proposed heuristic achieves results within 5\% of the optimal solution in tested scenarios, confirming its near-optimal performance in representative conditions.}
	\label{fig:main_idea_3}                     
\end{figure*}

%
%
\section{ The Proposed AMUD Framework} 
\label{The Proposed AMUD Framework}
This section presents the main concept and an overview of the proposed AMUD method. Following this, we will describe the AMUD procedure in detail. Finally, we will discuss the key aspects of our proposed design.
%
%
\subsection{Main Idea}

The proposed AMUD framework maximizes energy efficiency~\eqref{Eq:obj_P1_3} while ensuring user fairness~\eqref{Eq:28_3} and satisfying minimum data rate constraints~\eqref{Eq:14_3} for all users. Enhance network capacity and energy efficiency through the cooperative deployment of multiple UAVr with LEO satellite support.
The design adopts a sequential and opportunistic trigger strategy, prioritizing critical issues such as NLoS mitigation and uneven user distribution before power optimization, thereby avoiding redundant actions and improving resource efficiency. The decision criteria are derived from~\eqref{Eq:21_3}, \eqref{Eq:23_3}, and \eqref{Eq:28_3}–\eqref{Eq:obj_P1_3}, ensuring a first-principles optimization framework. Extensive simulations under varying traffic, congestion, and satellite visibility conditions confirm the robustness and stability of the approach.
The AMUD framework operates in three sequential phases, as illustrated in Fig.~\ref{The flowchart of the AMUD}:

\begin{enumerate}
    \item \textbf{Initial Phase:} The CCS initializes system parameters, including user locations and GBS association states, and evaluates network load by identifying surplus GBS users \((\Omega^{\mathrm{Ex}}_{G})\) and UAVr sector densities \((\omega^{\mathrm{max}}_{j})\). This step is performed once before iteration (Fig.~\ref{fig:fig2:a}).
    
    \item \textbf{Re-association Phase:} The CCS dynamically reassigns users when \(\omega^{\mathrm{max}}_{j} > 0\) or \(\Omega^{\mathrm{Ex}}_{G} > 0\), subject to satellite visibility:
    \begin{itemize}
        \item \textbf{AMUD-SAGIN:} Activated when \(V_{i \leftarrow s} = 1\) and \(\gamma_{i \leftarrow s} < \gamma_{\text{th}}\).
        \item \textbf{EGC-SAGIN:} Equal-gain combining-based SAGIN (EGC–SAGIN) applied under the same condition to improve reliability via diversity.
        \item \textbf{LEO-GBS only:} LEO satellites and GBS (LEO-GBS),  used when \(V_{i \leftarrow s} = 1\) and \(\gamma_{i\leftarrow s} \geq \gamma_{\text{th}}\).
        \item \textbf{GBS-only:} Selected when \(V_{i \leftarrow s} \neq 1\).
    \end{itemize}
    This strategy ensures reliable connectivity and load balancing based on link conditions (Fig.~\ref{fig:fig2:b}), while leveraging \textit{a priori} decision logic to avoid unnecessary UAVr deployment.
    
    \item \textbf{User Association and Power Optimization Phase:} The CCS jointly optimizes user association and UAVr transmit power in an iterative manner until all surplus users are reassigned and optimal power allocation is achieved. The final configuration is then deployed to the UAVr nodes (Fig.~\ref{fig:fig2:c}).
\end{enumerate}
%
%
%
\begin{algorithm}[!ht]
\caption{AMUD Framework Workflow } 
\label{alg: AMUD}
\small
\textbf{Input:} \\
$\omega^{\mathrm{\rm max}}_{j}$, $\omega^{\mathrm{\rm max}}_{G}$: max user associations of UAVr and GBS, respectively; \\
$h^{\rm Ur}_j$: UAVr altitude~\eqref{Eq:02_3}; \\
${\bf E}=\{u_i\}_{i=1}^{N}$: set of horizontal user locations; where $1\leq \forall {i} \leq {N}$; \\
	 ${\rm{\bf U}} = \left\{ {{U_1},\dots,{\mathbf{U}_j},\dots ,{U_K}} \right\}$: the set of UAVr' horizontal locations,; \\	
	 ${\rm\bf\Omega}=\{{\rm\bf\Omega}_1,\dots,{\rm\bf\Omega}_K\}$: UAVr' association mapping;\\
	 ${\rm\bf\Omega_{j}}$: the set of users associated with the $j$-th UAVr;\\
	 $|{\rm\bf\Omega_{j}}|$: the number of         users associated with the $j$-th UAVr;\\
     $\bf{\Omega}^{\mathrm{\rm \bf{Ex}}}_{j}$ = $(|\Omega_{j}|-\omega^{\mathrm{\rm max}}_{j})$: the excess user associated with UAVr;  \\		
     ${\rm\bf\Omega_{G}}$: the set of users associated with the GBS;\\
     $\bf\Omega^{\mathrm{\rm Ex}}_{G}$=$(\Omega_{G}-\omega^{\mathrm{\rm max}}_{G})$: the excess user associated with GBS;  \\
     $\bf{\textit{D}^{\mathrm{\rm th}}_{j}} =\frac{{\rm\Omega_{j}}} {\pi*R^{{\rm 2}}_{j}}$: the threshold user density of UAVr;  \\
     ${\rm\bf R_{j}}$: the coverage radius of UAVr~\eqref{Eq:18_3};\\
	\textbf{Pseudo-code:}
	\begin{algorithmic}[1]
    \STATE Let $\text{P}^{\mathrm{\rm Hov}}_{j}$~\eqref{Eq:01_3} be the required hovering power of each UAVr;     \label{alg: AMUD:01}
    \STATE Let the GBS coverage area divide into multiple sub-sectors having radius $R_j$~\eqref{Eq:18_3};                \label{alg: AMUD:02}
     \STATE {Let ${\rm\bf P}=\{p_{1,1},\dots,p_{i,j},\dots,p_{N,K}\}$ be the set of transmission power from the $j$-th UAVr to the $i$-th user; 	\label{alg:AMUD_power} }
    \STATE We consider two types of hotspots, \\          \label{alg: AMUD:03}
    a) density-based,~{$\omega^{\mathrm{\rm max}}_{j} = D^{\mathrm{\rm th}}_{j}*\pi*R^{\mathrm{\rm 2}}_{j}>0$},\\
    b) number of user-based,~${\Omega}^{\mathrm{\rm {Ex}}}_{G}$=$(\Omega_{G}-\omega^{\mathrm{\rm max}}_{G})>0$;  
    \FOR{$j=1$ to $K$}                    \label{alg: AMUD:04}
	\WHILE {(${\Omega^{\mathrm{\rm Ex}}_{G}}>0 \vee \Omega^{\mathrm{\rm Ex}}_{j}>0) \wedge {V_{j  \leftarrow s}}=1 \wedge \gamma_{i \leftarrow s} < \gamma_{\rm {th}}$)}      \label{alg: AMUD:05}	
    \STATE Selects the hotspot users $i^{*}$ from the $\Omega_{G}$;\\ 
    \label{alg: AMUD:06}
    Now CCS sends UAVr $U^{\mathrm{\rm *}}_{j}$ to serve hotspot users ${i}^{*}$ \\ 
    */ Start the re-association process, and the transmission power optimization starts */  
    \FOR {$j=1$ to $K$}                     \label{alg: AMUD:07}
\FOR{$i={i}$ to $|\rm\bf\Omega_j|$}         \label{alg: AMUD:08}
\STATE Update ${\rm\bf d}_{{i}^{*} 
 \leftarrow j^{*}}= \sqrt{{\rm{{\bf r}}}^2_{{i}^{*}  \leftarrow j^{*}}+{(h^{\mathrm{\rm Ur}}_{j}})^2}$; 
    \label{alg: AMUD:09}
	\STATE Determine the average path loss between the selected user and UAVr, $PL_{h^{\rm Ur}_j,r_{{i}^{*} \leftarrow j^{*}}}^{{\rm Avg}}$, by~\eqref{Eq:09_3} with ${\rm\bf d}_{{i}^{*} \leftarrow j^{*}}$; \label{alg: AMUD:10}
\STATE Determine the minimum required transmit power $p^{\min}_{i^{*}\leftarrow j^{*}}$ for guaranteeing the SINR value~\eqref{Eq:10_3} of the $i$-th user by
{\(p^{\min}_{i^\ast\leftarrow j^\ast}
=\gamma_{\text{th}}\,(B_{i^\ast\leftarrow j^\ast}\sigma^{2}+I_u)/\|\mathbf{h}_{i^\ast\leftarrow j^\ast \leftarrow s}(t)\|^{2}\).}
\label{alg:AMUD:11}
	\STATE Determine the optimal transmit power $p_{i^{*} \leftarrow j^{*}}$ to maximize the { \(E^{\mathrm{Tot}}\) from equation~(\ref{eq: energy_efficiency})} ;     \label{alg: AMUD:12}\\
	/* Check the transmit power constraint to make sure $p_{i^{*} \leftarrow j^{*}}$ is reasonable, and update $p_{i^{*} \leftarrow j^{*}}$. */
	\IF{$p_{i^{*} \leftarrow j^{*}}^{\min} \leq {p_{\max}}$}      \label{alg: AMUD:13}
	\IF{$p_{i^{*} \leftarrow j^{*}} \leq p_{i^{*},j^{*}}^{\min}$} \label{alg: AMUD:14}
	\STATE $p_{i^{*} \leftarrow j^{*}} = p_{i^{*},j^{*}}^{\min}$; \label{alg: AMUD:15}
	\ELSIF{$p_{i^{*} \leftarrow j^{*}} \geq {p_{\max}}$}          \label{alg: AMUD:16}
	\STATE $p_{i^{*} \leftarrow j^{*}} = {p_{\max}}$;             \label{alg: AMUD:17}
	\ENDIF                                             \label{alg: AMUD:18}
	\ELSE                                              \label{alg: AMUD:19}
	\STATE $p_{i^{*} \leftarrow j^{*}} = {p_{\max}}$;             \label{alg: AMUD:20}
	\ENDIF  \label{alg: AMUD:21}			
	\STATE ${\rm{{\bf P}}}_{i^{*} \leftarrow j^{*}} = {p_{i^{*} \leftarrow j^{*}}}$; \label{alg: AMUD:22} \hfill // Commit the updated ${p_{i^{*} \leftarrow j^{*}}}$
\ENDFOR                                                \label{alg: AMUD:23}
\ENDFOR                                                \label{alg: AMUD:24}
\ENDWHILE                                              \label{alg: AMUD:25}
\ENDFOR                                                \label{alg: AMUD:26}\\
    %
	  \STATE Check if the hotspot is moving, then deploy UAVr adaptively and maintain the collision distance between the UAVr;                 \label{alg: AMUD:27}
	\STATE send ${\rm{\bf\Omega}}$, and ${\rm \bf P}$ to all UAVr;
	\label{alg: AMUD:28}
	\end{algorithmic}
\end{algorithm}
%
%
\subsection{Proposed Algorithm}
\label{Proposed Algorithm_3}
Algorithm~\ref{alg: AMUD} outlines the pseudo-code for the main procedure of the proposed AMUD approach, executed by the CCS. A detailed explanation of each step within the AMUD procedure is provided below:

\begin{itemize}
    \item Lines~\ref{alg: AMUD:01} to~\ref{alg: AMUD:03}: These lines correspond to the initialization phase, where the CCS prepares temporary matrices to store data required for subsequent re-association and optimization decisions.
    
    \item Line~\ref{alg: AMUD:04}: The CCS evaluates whether it is overloaded by iterating through a for loop.
    
    \item Line~\ref{alg: AMUD:05}: The CCS employs a while-loop system to monitor excess users. If \(\Omega_G^{\rm Ex} > 0\) or \(\omega_j^{\rm max} > 0\), and if \(V_{j  \leftarrow s} = 1\) while \(\gamma_{i \leftarrow s} < \gamma_{\rm th}\), the CCS re-associates excess users from the overloaded GBS to the LEO satellite-assisted UAV. This process continues until conditions \(\Omega_G^{\rm Ex} = 0\) and \(\omega_j^{\rm Max} = 0\) are satisfied.
    
    \item Line~\ref{alg: AMUD:06}: The CCS selects \(i^{*}\) from \(\Omega_{G}\), identifying the sector with user-generated hotspots where \(\omega_j^{\rm Max} > 0\) and \(\Omega_G^{\rm Ex} > 0\). A corresponding UAV is then activated and deployed to serve the selected users \(i^{*}\) with CCS support.
    
    \item Lines~\ref{alg: AMUD:07} to~\ref{alg: AMUD:26}: After re-associating users, a for-loop optimizes the transmit power of each UAVr with the assistance of the CCS.
    
    \item Line~\ref{alg: AMUD:09}: The Euclidean distance is calculated between the \(j\)-th UAVr and the \(i\)-th user.
    
    \item Line~\ref{alg: AMUD:10}: The path loss between the \(j\)-th UAVr and the \(i\)-th user is updated based on the changes in distance, \(d_{i  \leftarrow j}\), using equation~\eqref{Eq:09_3}.
    
    \item Line~\ref{alg:AMUD:11}: The minimum required transmit power \(p_{i  \leftarrow j}^{\min}\) is determined to ensure the SINR value defined in equation~\eqref{Eq:10_3} for the \(i\)-th user.
    
    \item Line~\ref{alg: AMUD:12}: The optimal transmission power \(p_{i  \leftarrow j}\) is selected to maximize the {total energy efficiency \(E^{\mathrm{Tot}}\) from equation~(\ref{eq: energy_efficiency})}, ensuring that it adheres to the transmission power constraints.
    
    \item Lines~\ref{alg: AMUD:13} to~\ref{alg: AMUD:21}: These lines verify that the transmission power constraints are met, followed by updating the transmission power values \(p_{i  \leftarrow j}\) at line~\ref{alg: AMUD:22}.
    
    \item Line~\ref{alg: AMUD:27}: The CCS checks for hotspot mobility and adaptively deploys multiple UAVr, ensuring a safe collision-free distance between them.
    
    \item Final Step: The CCS transmits updated parameter sets, \(\Omega\) and \(P\), to all UAVr for deployment adjustments.
\end{itemize}

In the upcoming section, we will discuss the simulated results, along with inferences and observations.
%
%
%
\section{Results and Discussion}
\label{sec: Simulation Results and Performance Analysis_3}

This section presents the simulation results for the proposed AMUD framework, obtained by solving~\eqref{Eq:obj_P1_3}, and evaluates them in terms of total network capacity, energy efficiency, and capacity fairness. Simulations are conducted in \textsc{MATLAB} R$2020$b for an urban environment with randomly distributed users and varying GBS overload levels (excess users).
The proposed AMUD--SAGIN framework is compared with three benchmark schemes: EGC--SAGIN, only LEO--GBS, and a conventional GBS-only network. Detailed descriptions of these baselines are provided in Subsection~\ref{subsec:BenchmarkSchemes}.
To comply with UAV regulations in Taiwan, the UAVr altitude is fixed at $100$~m~\cite{laws_taiwan}. The air-to-ground channel parameters $(a, b, \eta_{\rm LoS}, \eta_{\rm NLoS}) = (9.61, 0.16, 1, 20)$ are adopted from~\cite{al2014optimal}. The system bandwidth is set to $20$~MHz, and the noise power density is $-174$~dBm / Hz~\cite{9946428}. Other parameters, including carrier frequency \(f_c\), UAVr transmit power \(p_{\max}\), and user association threshold \(\omega_{\rm max}\), follow standard SAGIN configurations and are summarized in Table~\ref{Simulation_Parameter_3}.

%
\begin{table}[!ht]
\centering
\caption{Simulation Parameters}
\label{Simulation_Parameter_3}
\begin{tabular}{|p{3cm}|c|c|}
\hline
\footnotesize
\textbf{Parameter} & \textbf{Symbol} & \textbf{Value} \\ \hline
    Maximum path loss & $PL_{d_{i \leftarrow j}}^{\mathrm{max}}$ & $119$ dB \\ \hline
    Carrier frequency & $f_c$ & 2.4 GHz \\ \hline
    LEO Satellite Altitude & $h_{\rm s}$ & $1000$ km \\ \hline
    UAVr Service Altitude & {$h^{\rm Ur}_j$} & $100$ m \\ \hline
    GBS Coverage Area & $F_{\rm hoz} \times F_{\rm ver}$ & $1 \times 1$ km \\ \hline
    Allocated Bandwidth & $B_{i \leftarrow j}, B_{i \leftarrow s}, B_{i \leftarrow G}$ & $20$ MHz \\ \hline
    Number of UAVr & $K$ & 2 \\ \hline
    Path loss exponent & $\alpha_{\rm exp}$ & 2 \\ \hline
    Total Number of Users & $N$ & $400$ \\ \hline
    Max Associations at UAVr & $\omega^{\mathrm{max}}_j$ & $200$ \\ \hline
    Max Associations at GBS & $\omega^{\mathrm{max}}_G$ & $100$ \\ \hline
    SINR Threshold & $\gamma_{\rm th}$ & 3 dB \\ \hline
    Hovering Power & $p_j^{\mathrm{Hov}}$ & 58 W \\ \hline
    UAVr Transmission Power & $p_{\rm max}$ & 20 dBm \\ \hline
    LEO Transmission Power & $P_s^{\rm tx}$ & 50 dBm \\ \hline
    GBS Transmission Power & $P_{\!G}^{\mathrm tx}$ & 40 dBm \\ \hline
\end{tabular}
\end{table}

\subsection{Benchmark Schemes for Comparison}
\label{subsec:BenchmarkSchemes}
To evaluate the performance of the proposed AMUD framework, we compare it with the following three benchmark schemes under identical simulation settings:
\begin{itemize}
    \item \textbf{AMUD--SAGIN:} This scheme combines the SINR of the satellite–user and UAVr–user links using a weighted approach that prioritizes stronger links, as described in Equation~\eqref{eq: SNR_max_1}. Multiple UAVr are dynamically positioned based on user congestion. 
    \item \textbf{EGC--SAGIN:} This scheme uses an equal-gain combination, where the SINR of the satellite and UAVr links are averaged equally. Multiple UAVr are deployed at the same locations as in AMUD. The only difference from AMUD lies in the SINR combining method.
    \item \textbf{LEO--GBS:} This setup involves only the LEO satellite and GBS. The satellite helps users who cannot be served by GBS when it is visible (Fig.~\ref{The flowchart of the AMUD}), and the SINR condition in Equation~\eqref{eq: Satellite_User} is satisfied, and no UAVr are used.
    \item \textbf{GBS--Only:}  
    A purely terrestrial network where users are served only by GBS. Users exceeding the GBS capacity are dropped. Neither satellite assistance nor UAVr is used.
\end{itemize}

%
\begin{figure}[!ht]
	\centering
	\includegraphics[width=0.48\textwidth, height=6.2cm]{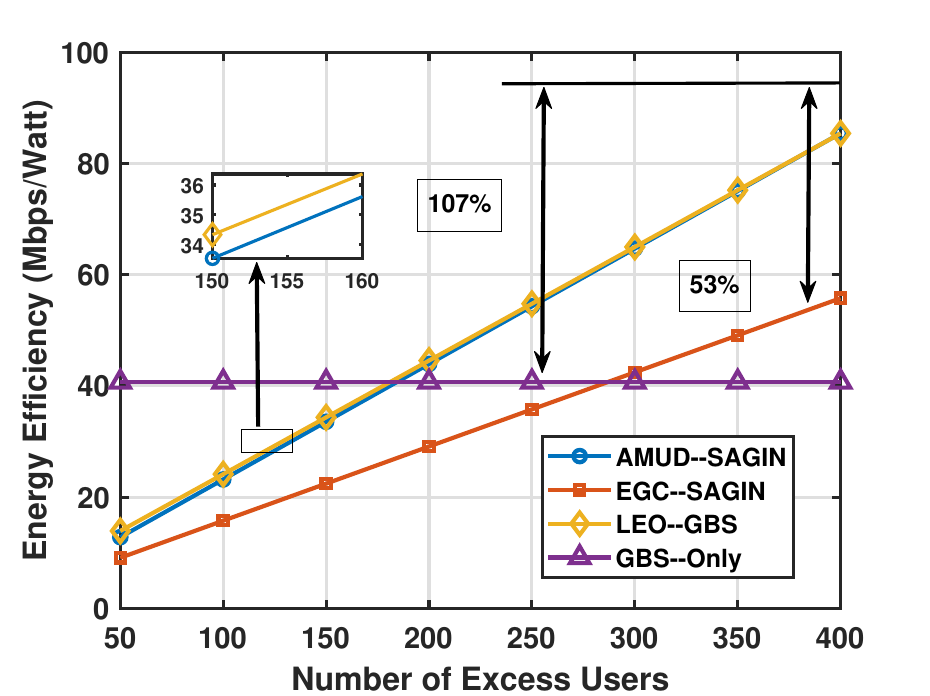}
	\caption{ {\small Performance comparison of the proposed AMUD--SAGIN with other approaches, including EGC--SAGIN, LEO--GBS, and GBS--only, in terms of energy efficiency with varying numbers of excess users.}}
	\label{EE_uav_Tx_and_UE_vary}
\end{figure}

\subsection{Energy Efficiency Analysis}
Fig.~\ref{EE_uav_Tx_and_UE_vary} evaluates the energy efficiency of the proposed AMUD framework under varying user loads ($50{:}400$). At low user loads, AMUD--SAGIN exhibits slightly lower efficiency than the GBS-only system due to the initial overhead of UAVr deployment and coordination, although this overhead contributes less than $2\%$ additional signaling load.
As the user load increases ($200$--$400$), AMUD--SAGIN significantly outperforms GBS-only, achieving up to $107\%$ higher energy efficiency in $400$ users. This improvement is driven by dynamic resource allocation and cooperative utilization of UAVr and LEO satellites to mitigate congestion. Moreover, AMUD--SAGIN surpasses EGC--SAGIN by $53\%$, demonstrating superior adaptability under high-demand conditions.
Although the energy efficiency trends of AMUD--SAGIN and LEO--GBS appear similar, AMUD--SAGIN consistently outperforms LEO--GBS due to its adaptive link-weighting mechanism, which prioritizes stronger links and enhances overall capacity.

\begin{figure}[!ht]
	\centering
	\includegraphics[width=0.48\textwidth, height=6.2cm]{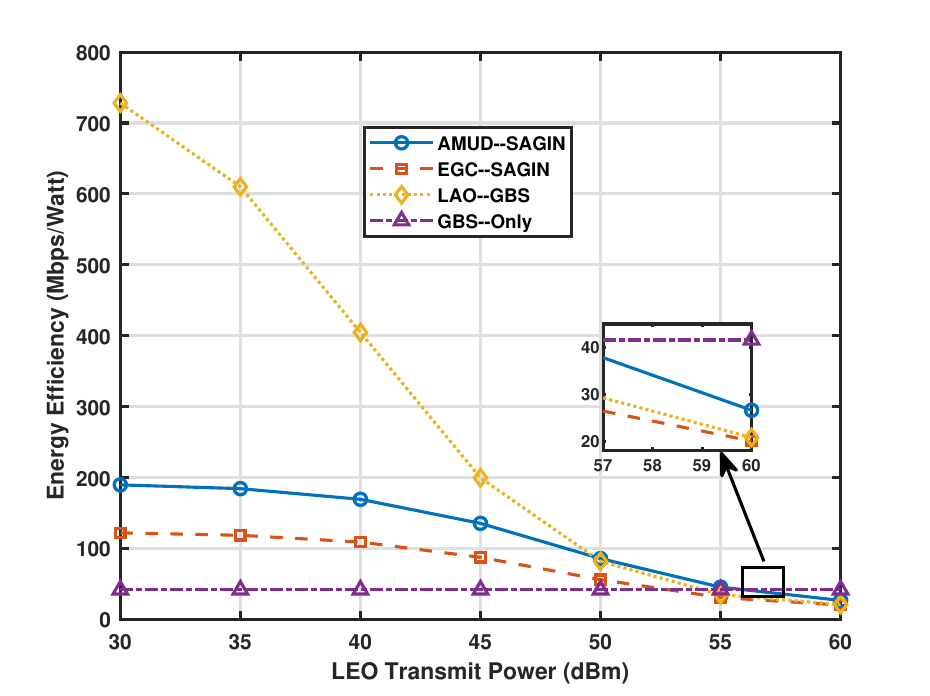}
	\caption{\small Performance comparison of the proposed AMUD--SAGIN with other approaches, including EGC--SAGIN, LEO--GBS, and GBS--only, in terms of energy efficiency with varying LEO satellite transmission power levels.}
	\label{Final_EE_leoTx_vary}
\end{figure}
The simulation (Fig.~\ref{Final_EE_leoTx_vary}) evaluates the energy efficiency of the proposed AMUD--SAGIN framework under varying LEO satellite transmission powers ($30$--$60$~dBm) in a SAGIN. When the transmit power of the LEO satellite is low ($30$~ dBm), the energy efficiency of AMUD--SAGIN is slightly lower than that of the LEO--GBS system due to the satellite's limited transmission power. However, it still outperforms EGC--SAGIN by $56\%$ and GBS only by $360\%$, showcasing its ability to leverage UAVr-LEO collaboration even at lower power levels. As the transmit power increases to $60$~dBm, AMUD--SAGIN significantly improves its performance, achieving $30\%$ higher energy efficiency than SAGIN and $27\%$ higher than EGC--SAGIN. 

\begin{figure}[!ht]
	\centering
	\includegraphics[width=0.48\textwidth, height=6.2cm]{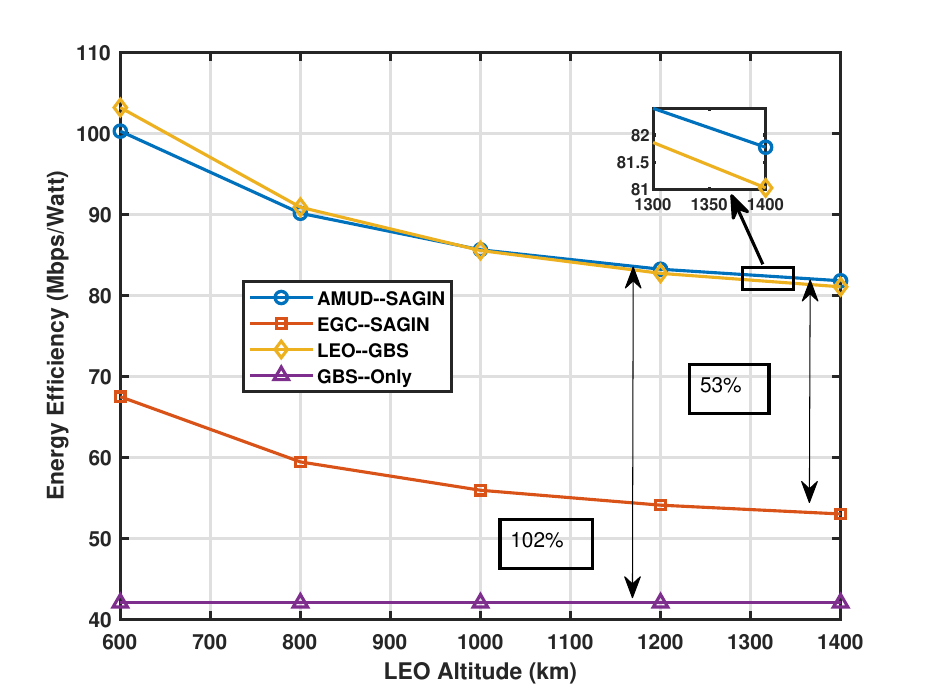}
    \caption{\small Performance comparison of the proposed AMUD--SAGIN with other approaches, including EGC--SAGIN, LEO--GBS, and GBS--only, in terms of energy efficiency with varying LEO satellite altitudes.}
	\label{Final_EE_leo_Ht_vary}
\end{figure}

Fig.~\ref{Final_EE_leo_Ht_vary} illustrates the energy efficiency (Mbps/Watt) as a function of the altitude of the LEO satellite ($600$--$1400$~km) with excess users $400$. The proposed AMUD--SAGIN approach achieves the highest energy efficiency at $81.74$~Mbps/Watt, outperforming LEO--GBS by $2\%$, EGC--SAGIN by $53\%$ ($52.95$~Mbps / Watt), and the baseline system only GBS by $102\%$ ($41.44$~Mbps/Watt). As LEO altitude increases, energy efficiency generally decreases due to higher path loss and increased transmission power requirements. However, AMUD--SAGIN consistently maintains superior performance at varying altitudes, demonstrating its robustness in managing energy consumption through optimized UAVr deployment.

%
\begin{figure}[!ht]
	\centering
	\includegraphics[width=0.48\textwidth, height=6.2cm]{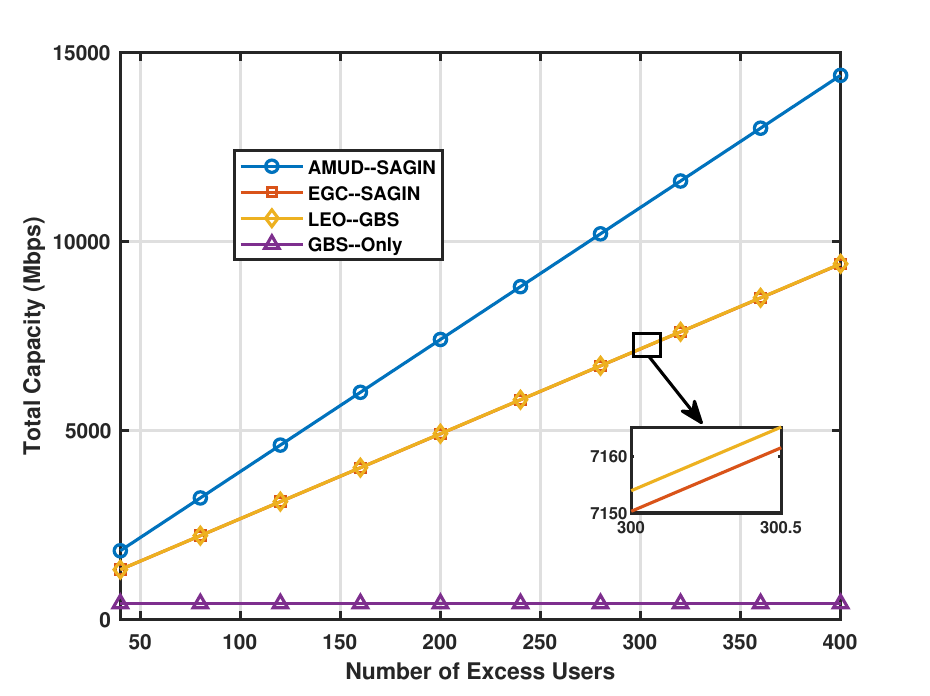}
    \caption{\small Performance comparison of the proposed AMUD--SAGIN with other approaches, including EGC--SAGIN, LEO--GBS, and GBS--only, in terms of total capacity with varying excess user counts.}
    \label{Final_Cap_user_vary}
\end{figure}

\subsection{ Network Capacity Analysis}
Fig.~\ref{Final_Cap_user_vary} shows the total capacity (Mbps) as a function of the excess users in various communication schemes~({$40{:}400$}). The proposed AMUD approach consistently outperforms all other schemes, achieving a peak capacity of $14,386$ Mbps for 400 excess users, highlighting its superior scalability and efficiency. In contrast, the LEO-GBS and EGC-SAGIN schemes demonstrate comparable performance, reaching approximately $9,401$ Mbps and $9,397$ Mbps, respectively. This indicates that while both benefit from satellite-ground integration, they lack the dynamic relay-deployment capabilities inherent in AMUD-SAGIN. The GBS-only scheme records the lowest capacity, at $419$ Mbps, due to its reliance solely on terrestrial infrastructure. These findings validate the effectiveness of AMUD-SAGIN in utilizing UAVr to significantly improve network capacity, positioning it as a robust solution for managing high user densities in SAGINs.

\begin{figure}[!ht]
	\centering
	\includegraphics[width=0.48\textwidth, height=6.2cm]{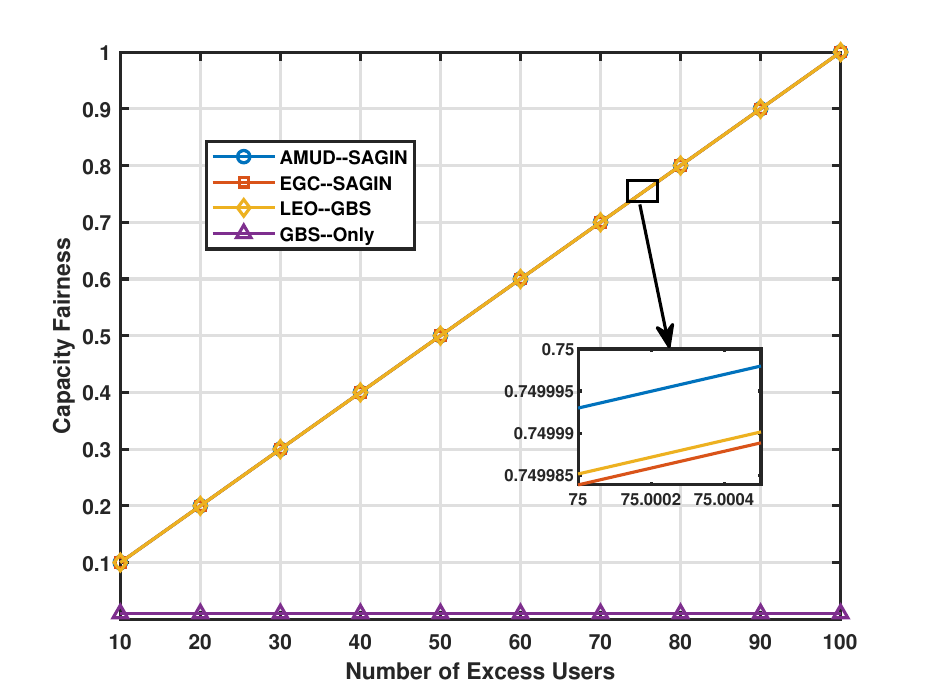}
    \caption{\small Performance comparison of the proposed AMUD--SAGIN with other approaches, including EGC--SAGIN, LEO--GBS, and GBS--only, in terms of capacity fairness with varying excess user counts.}
    \label{Final_Cap_Fairness}
\end{figure}

 \subsection{ The Fairness Analysis}
Fig.~\ref{Final_Cap_Fairness} presents the capacity fairness under different network configurations~($10{:}100$). The proposed AMUD framework, along with EGC--SAGIN and LEO--GBS, achieves near-optimal fairness with Jain’s index close to $1.0$, indicating an almost uniform distribution of capacity among users. In particular, AMUD--SAGIN, EGC--SAGIN, and LEO-GBS exhibit nearly identical fairness performance.
The inclusion of LEO--GBS already mitigates severe fairness issues by providing connectivity to users that would otherwise be dropped in a congested GBS-only network. As a result, all SAGIN-based schemes achieve near-ideal fairness.
In contrast, the GBS-only configuration shows extremely poor fairness (index $\approx 0.01$), highlighting its inability to distribute resources equitably. These results demonstrate that while AMUD primarily enhances capacity and energy efficiency, it preserves the high fairness achieved by SAGIN-based architectures, emphasizing the necessity of UAVr and satellite integration for equitable resource allocation.
%

\section{ Conclusion } 
\label{Sec: Conclusion_3}
This paper presents the AMUD strategy for integrated satellite air-ground networks. The AMUD framework effectively utilizes multiple UAVr within a LEO satellite-based amplify-and-forward system to enhance the signal-to-interference-plus-noise ratio at the user level. It addresses critical challenges that traditional GBS faces, including increased user density, non-line-of-sight conditions, signal blockages, and difficulty in managing random access users.
The AMUD framework emphasizes user fairness in resource allocation while avoiding collisions in multi-UAVr scenarios. In response to dynamic user traffic variations, it adaptively deploys multiple UAVr to maximize user SINR through cooperative diversity-based signal combinations. This is achieved through intelligent communication between LEO satellites, multiple UAVr, and users. 
The simulation results indicate that the AMUD framework significantly enhances network performance metrics, including total network capacity, energy efficiency, and capacity fairness, particularly under conditions of high user density. These findings underscore the potential of integrating multiple UAVr with LEO satellite technologies to meet the demands of future urban communication networks, especially in scenarios characterized by fluctuating user demands and challenging environmental conditions.
%

%
%
\appendices
\section{Proof of Signal Combiner}
\label{appendix-A_3}
\begin{proof}
The weighted combination of the output ~\cite{jagannatham2015principles} at the receiver, derived from~\eqref{Eq: receive signal}, is expressed as,
\begin{align}
    \gamma_{\text{AF}}(\mathbf{w}) = {\text{Signal Power}}/{\text{Noise Power}}.
    \label{appendix-A1_3}
\end{align}
\noindent{The signal power is calculated as,}
\begin{align}
\mathbb{E}\!\left\{\left(\mathbf{w}^{\dagger}\mathbf{h}\right)\mathbf{x}_{\text{sym}}\mathbf{x}_{\text{sym}}^{\dagger}\left(\mathbf{w}\mathbf{h}^{\dagger}\right)\right\}
&=  P_s^{\text{tx}}\,\left(\mathbf{w}^{\dagger}\mathbf{h}\,\mathbf{w}\,\mathbf{h}^{\dagger}\right)\label{appendix-A2_3}
\end{align}
where ${P_{s}^{\text{tx}}}=\mathbb{E} \left\{ \mathbf{x}_{\text{sym}} \mathbf{x}_{\text{sym}}^{\dagger} \right\}$. The noise power is calculated as,
%
\begin{align}
    \mathbb{E}\left\{ \left( \mathbf{w}^{\dagger} {{\textbf{n}}} \right) \left( \mathbf{w}^{\dagger} {{\textbf{n}}} \right)^{\dagger} \right\} 
    = \mathbb{E}\left\{ \mathbf{w}^{\dagger} {{\textbf{n}}} {{\textbf{n}}}^{\dagger} \mathbf{w} \right\} = \mathbf{w}^{\dagger} \mathbb{E} \left\{ {{\textbf{n}}} {{\textbf{n}}}^{\dagger} \right\} \mathbf{w}.
    \label{appendix-A3_3}
\end{align}

Substituting \eqref{appendix-A2_3} and \eqref{appendix-A3_3} into \eqref{appendix-A1_3}, we obtain:
\begin{align}
    \gamma_{\text{AF}}(\mathbf{w}) 
    &= \frac{{P_s^{\text{tx}}} \mathbf{w}^{\dagger} {\textbf{h}} {\textbf{h}}^{\dagger} \mathbf{w}}{\mathbf{w}^{\dagger} \mathbb{E} \left\{ {{\textbf{n}}} {{\textbf{n}}}^{\dagger} \right\} \mathbf{w}} 
    = {P_{s}^{\text{tx}}} \frac{\mathbf{w}^{\dagger} R_{\rm k} \mathbf{w}}{\mathbf{w}^{\dagger} \mathscr{R}_{\rm n} \mathbf{w}},
    \label{signal_combiner_final}
\end{align}
where $R_{\rm k} = {\textbf{h}}{\textbf{h}}^{\dagger}, \text{and}~ \mathscr{R}_{\rm n}= \mathbb{E}\{ {{\textbf{n}}} {{\textbf{n}}}^{\dagger} \} \\
    =  \mathbb{E} \left\{ \begin{bmatrix} \mathbf{n}_{ i\leftarrow s} \\ \mathbf{\mathbf{h}_{ i \leftarrow j \leftarrow s}} \mathcal{G} {n}_{ j \leftarrow s} + \mathbf{n}_{i \leftarrow j \leftarrow s} \end{bmatrix} \begin{bmatrix} \mathbf{n}_{ i\leftarrow s} \\ 
    \mathbf{\mathbf{h}_{ i \leftarrow j \leftarrow s}} \mathcal{G} {n}_{ j \leftarrow s} + \mathbf{n}_{i \leftarrow j \leftarrow s} \end{bmatrix}^{\dagger} \right\} \nonumber\\
    ={{\sigma }^2}\begin{bmatrix}
   {1} & 0  \\
   0 & {1}+{{\mathbf{h}_{ i \leftarrow j \leftarrow s}}}\mathbf{h}_{ i \leftarrow j \leftarrow s}^{\dagger }{{\mathcal{G}}^{2}} \nonumber \\
    \end{bmatrix} ={{\sigma }^2}\left( {\mathfrak{I}_L}\oplus {{R}_{\text{M}}} \right) .\nonumber\\ 
$
\end{proof}
%
%
\section{Proof of Optimal Weight Vector and SINR}
\label{appendix-B_3}
Let \( u = \mathbf{w}^{\dagger} R_{\rm k} \mathbf{w} \) and \( v = \mathbf{w}^{\dagger} \mathscr{R}_{\text{n}} \mathbf{w} \) from \eqref{signal_combiner_final}, then the function $\gamma_{\text{\rm AF}}(\mathbf{w})$ can be written as
\begin{equation}
    \gamma_{\text{\rm AF}}(\mathbf{w}) = {u}/{v}
    \label{appendix-B1_3}
\end{equation}
Using the quotient rule for derivatives:
\[
\frac{d}{d\mathbf{w}} \left( \frac{u}{v} \right) = \frac{v \frac{d}{d\mathbf{w}}(u) - u \frac{d}{d\mathbf{w}}(v)}{v^2}
\]

The derivatives of \( u \) and \( v \) are:
\[
{d}/{d\mathbf{w}} (u) = 2 R_k \mathbf{w}, \quad {d}/{d\mathbf{w}} (v) = 2 R_n \mathbf{w}
\]

Substituting these into the quotient rule:
\[
\frac{d}{d\mathbf{w}} \gamma_{\text{AF}}(\mathbf{w}) = \frac{2 (\mathbf{w}^\dagger R_n \mathbf{w}) R_k \mathbf{w} - 2 (\mathbf{w}^\dagger R_k \mathbf{w}) R_n \mathbf{w}}{(\mathbf{w}^\dagger R_n \mathbf{w})^2}
\]

Setting the derivative to zero leads to the following.
\[
(\mathbf{w}^\dagger R_n \mathbf{w}) R_k \mathbf{w} = (\mathbf{w}^\dagger R_k \mathbf{w}) R_n \mathbf{w}
\]

This simplifies to
\[
R_k \mathbf{w} = \lambda R_n \mathbf{w}, \quad \text{where } \lambda = ({\mathbf{w}^\dagger R_k \mathbf{w}})/({\mathbf{w}^\dagger R_n \mathbf{w}})
\]

Rewrite:~$R_n^{-1} R_k \mathbf{w} = \lambda \mathbf{w}$. 
The optimal weight vector \( \mathbf{w}_{\text{opt}} \) is the eigenvector of \( R_n^{-1} R_k \) corresponding to its largest eigenvalue~\cite{holter2002optimal}. Let \( {\textbf{h}}' = R_n^{-1} {\textbf{h}} \). Then:
\[
\mathbf{w}_{\text{opt}} = c_r {\textbf{h}}' = c_r R_n^{-1} {\textbf{h}}, \quad c_r \neq 0
\]

Substituting \( \mathbf{w}_{\text{opt}} \) into the eigenvalue equation gives: \[ \lambda_{\text{max}} = {\textbf{h}}^\dagger R_n^{-1} {\textbf{h}} \]

Thus, the optimal weight vector maximizes \( \gamma_{\text{AF}}(\mathbf{w}) \) and is proportional to \( R_n^{-1} {\textbf{h}} \).

\subsection*{Step 1: Substitution of ${R}_{n}^{-1}$=$\left( \mathfrak{I}_L \oplus R_{\rm M} \right)^{-1}$}
$\frac{p_s^{\text{tx}}}{{\sigma}^2} \begin{bmatrix} \mathbf{h}_{i\leftarrow s} \\ \mathbf{\mathbf{h}_{i\leftarrow j \leftarrow s}} \mathcal{G} \hbar_{j\leftarrow s} \end{bmatrix}^\dagger \left( \mathfrak{I}_L \oplus R_{\text{M}}^{-1} \right) \begin{bmatrix} \mathbf{h}_{i\leftarrow s} \\ \mathbf{\mathbf{h}_{i\leftarrow j \leftarrow s}} \mathcal{G} \hbar_{j\leftarrow s} \end{bmatrix}.
$

\subsection*{Step 2: Simplifying the matrix multiplication}
The matrix \(\left( \mathfrak{I}_L \oplus R_{\text{M}}^{-1} \right)\) acts on the vector \(\begin{bmatrix} \mathbf{h}_{i\leftarrow s} \\ \mathbf{h}_{i\leftarrow s} \mathcal{G} \hbar_{j\leftarrow s} \end{bmatrix}\) as,
\[
\left( \mathfrak{I}_L \oplus R_{\text{M}}^{-1} \right) \begin{bmatrix} \mathbf{h}_{i\leftarrow s} \\ \mathbf{h}_{i\leftarrow s} \mathcal{G} \hbar_{j\leftarrow s} \end{bmatrix} = \begin{bmatrix} \mathbf{h}_{i\leftarrow s} \\ R_{\text{M}}^{-1} \cdot \mathbf{h}_{i\leftarrow s} \mathcal{G} \hbar_{j\leftarrow s} \end{bmatrix}.
\]
This yields the following. 
\[
\frac{p_s^{\text{tx}}}{{\sigma}^2} \begin{bmatrix} \mathbf{h}_{i\leftarrow s} \\ \mathbf{\mathbf{h}_{i\leftarrow s}} \mathcal{G} \hbar_{j\leftarrow s} \end{bmatrix}^\dagger \begin{bmatrix} \mathbf{\mathbf{h}_{i\leftarrow s}} \\ R_{\text{M}}^{-1} \cdot \mathbf{\mathbf{h}_{i\leftarrow s}} \mathcal{G} \hbar_{j\leftarrow s} \end{bmatrix}.
\]
{Step 3: Expanding the inner product:} 
\[
\frac{p_s^{\text{tx}}}{{\sigma}^2} \left( \mathbf{\mathbf{h}_{i\leftarrow s}}^\dagger \mathbf{\mathbf{h}_{i\leftarrow s}} + \left( \mathbf{\mathbf{h}_{i\leftarrow s}} \mathcal{G} \hbar_{j\leftarrow s} \right)^\dagger R_{\text{M}}^{-1} \cdot \left( \mathbf{\mathbf{h}_{i\leftarrow s}} \mathcal{G} \hbar_{j\leftarrow s} \right) \right).
\]
The first term is simply \( |\mathbf{\mathbf{h}_{i\leftarrow s}}|^2 \),  
and the given: $R_{\text{M}} = 1 + \mathbf{\mathbf{h}_{i\leftarrow s}} \mathbf{\mathbf{h}_{i\leftarrow s}}^\dagger \mathcal{G}^2,$ and substituting this into the second term:


\begin{equation}
\begin{aligned}
    &= \frac{p_s^{\text{tx}}}{\sigma^2} \left( \left| \mathbf{h}_{ i \leftarrow s} \right|^2 + \frac{\left| \mathbf{h}_{ i \leftarrow j \leftarrow s} \mathcal{G} \right|^2 \left| \hbar_{j\leftarrow s} \right|^2}{1 + \left| \mathbf{h}_{ i \leftarrow j \leftarrow s} \mathcal{G} \right|^2 } \right) \\
    & = \gamma_{i \leftarrow s} + \gamma_{i \leftarrow j} ({\left\| \hbar_{j\leftarrow s} \right\|^2})/({1 + \left\| \mathbf{h}_{ i \leftarrow j \leftarrow s} \mathcal{G}\right\|^2}) \\
    & \text{Multiply } \frac{p_s^{\text{tx}}}{\sigma^2} \text{ into the expression:} \\
    & = \gamma_{i \leftarrow s} + \gamma_{i \leftarrow j} \frac{\frac{p_s^{\text{tx}}}{\sigma^2} \left\| \hbar_{j\leftarrow s} \right\|^2}{\frac{p_s^{\text{tx}}}{\sigma^2} \left( 1 + \left\| \mathbf{h}_{ i \leftarrow j \leftarrow s} \mathcal{G} \right\|^2 \right)} 
     = \gamma_{i \leftarrow s} + \frac{\gamma_{i \leftarrow j} \gamma_{j \leftarrow s}}{\gamma_{i \leftarrow j} + \varsigma}.
\end{aligned}
\end{equation}
    where $\varsigma =\frac{1}{{{\sigma }^{2}}{\mathcal{G}^{2}}}$~\cite{6503767},
    \( R_{\text{M}} = \mathfrak{I}_L + \mathcal{G}^2 \mathbf{\mathbf{h}_{ i \leftarrow j \leftarrow s}} \mathbf{\mathbf{h}_{ i \leftarrow j \leftarrow s}}^\dagger \) represents the covariance matrix of \( {{\textbf{n}}} \), $\mathfrak{I}_L$ is the identity matrix $L*L$.

\vspace{-5pt}
\bibliographystyle{IEEEtran}
\bibliography{reference}

@ARTICLE{7918510,
  author={Alzenad, Mohamed and El-Keyi, Amr and Lagum, Faraj and Yanikomeroglu, Halim},
  journal={IEEE Wireless Commun. Lett.}, 
  title={{3-D} Placement of an Unmanned Aerial Vehicle Base Station ({UAV}-{BS}) for Energy-Efficient Maximal Coverage}, 
  year={2017},
  volume={6},
  number={4},
  pages={434-437},
  doi={10.1109/LWC.2017.2700840}}

@ARTICLE{8949359,
  author={Singh, Vibhum and Upadhyay, Prabhat K. and Lin, Min},
  journal={IEEE Wireless Commun. Lett.}, 
  title={On the Performance of {NOMA}-Assisted Overlay Multiuser Cognitive Satellite-Terrestrial Networks}, 
  year={2020},
  volume={9},
  number={5},
  pages={638-642},
  doi={10.1109/LWC.2020.2963981}}

@book{jagannatham2015principles,
  title={Principles of modern wireless communication systems},
  author={Jagannatham, Aditya K},
  year={2015},
  publisher={McGraw-Hill Education}
}

@INPROCEEDINGS{10757512,
  author={Bhola and Chen, Yu-Jia and Balakrishnan, Ashutosh and De, Swades and Wang, Li-Chun},
  booktitle={IEEE 100th Vehicular Technology Conference}, 
  title={Cooperative {UAV}-{Relay} based Satellite Aerial Ground Integrated Networks}, 
  year={2024},
  volume={},
  number={},
  pages={1-5},
  doi={10.1109/VTC2024-Fall63153.2024.10757512}}

@ARTICLE{9383780,
  author={Aydin, Yucel and Kurt, Gunes Karabulut and Ozdemir, Enver and Yanikomeroglu, Halim},
  journal={IEEE IoT J.}, 
  title={Group Handover for Drone Base Stations}, 
  year={2021},
  volume={8},
  number={18},
  pages={13876-13887},
  doi={10.1109/JIOT.2021.3068297}}

@article{liu2022evolution,
  title={Evolution of {NOMA} toward next-generation multiple access ({NGMA}) for {6G}},
  author={Liu, Yuanwei and Zhang, Shuowen and Mu, Xidong and Ding, Zhiguo and Schober, Robert and Al-Dhahir, Naofal and Hossain, Ekram and Shen, Xuemin},
  journal={IEEE J. Sel. Areas Commun.},
  volume={40},
  number={4},
  pages={1037--1071},
  year={2022},
  publisher={IEEE}
}

@ARTICLE{8626457,
  author={Guidotti, Alessandro and Vanelli-Coralli, Alessandro and Conti, Matteo and Andrenacci, Stefano and Chatzinotas, Symeon and Maturo, Nicola and Evans, Barry and Awoseyila, Adegbenga and Ugolini, Alessandro and Foggi, Tommaso and Gaudio, Lorenzo and Alagha, Nader and Cioni, Stefano},
  journal={IEEE Trans. Veh. Technol.}, 
  title={Architectures and Key Technical Challenges for {5G} Systems Incorporating Satellites}, 
  year={2019},
  volume={68},
  number={3},
  pages={2624-2639},
  doi={10.1109/TVT.2019.2895263}}

@ARTICLE{9275613,
  author={Giordani, Marco and Zorzi, Michele},
  journal={IEEE Netw.}, 
  title={Non-Terrestrial Networks in the {6G} Era: Challenges and Opportunities}, 
  year={2021},
  volume={35},
  number={2},
  pages={244-251},
  doi={10.1109/MNET.011.2000493}}

@ARTICLE{9755995,
  author={Chen, Yu-Jia and Chen, Wei and Ku, Meng-Lin},
  journal={IEEE Commun. Lett.}, 
  title={Trajectory Design and Link Selection in {UAV}-Assisted Hybrid Satellite-Terrestrial Network}, 
  year={2022},
  volume={26},
  number={7},
  pages={1643-1647},
  doi={10.1109/LCOMM.2022.3166961}}

@article{10.1109/COMST.2022.3197695,
author = {Al-Hraishawi, Hayder and Chougrani, Houcine and Kisseleff, Steven and Lagunas, Eva and Chatzinotas, Symeon},
journal = {Commun. Surveys Tuts.},
title = {A Survey on Nongeostationary Satellite Systems: The Communication Perspective},
year = {2023},
volume = {25},
number = {1},
pages = {101–132},
issue_date = {Firstquarter 2023},
publisher = {IEEE Press},
doi = {10.1109/COMST.2022.3197695},

}

@manual{3GPPTS22125,
    organization  = "3GPP Standard TS 22.125 V17.2.0, October",
    title         = "{3GPP} Technical Specification Group Services and System Aspects; Unmanned aerial system (UAS) support in {3GPP}",
    year          =  2020,
    month         =  "",
    number        = "",
    note          = "" }

@ARTICLE{8917591,
  author={Sharma, Pankaj K. and Deepthi, Darapaneni and Kim, Dong In},
  journal={IEEE Commun. Lett.}, 
  title={Outage Probability of {3-D} Mobile {UAV} Relaying for Hybrid Satellite-Terrestrial Networks}, 
  year={2020},
  volume={24},
  number={2},
  pages={418-422},
  doi={10.1109/LCOMM.2019.2956526}}

@article{zhao2018exact,
  title={Exact and asymptotic ergodic capacity analysis of the hybrid satellite-terrestrial cooperative system over generalized fading channels},
  author={Zhao, Yue and Chen, Huifang and Xie, Lei and Wang, Kuang},
  journal={IET Communications},
  volume={12},
  number={11},
  pages={1342--1350},
  year={2018},
  publisher={Wiley Online Library} }

@ARTICLE{9946428,
  author={Lai, Chuan-Chi and Bhola and Tsai, Ang-Hsun and Wang, Li-Chun},
  journal={IEEE Trans. Veh. Technol.}, 
  title={Adaptive and Fair Deployment Approach to Balance Offload Traffic in Multi-{UAV} Cellular Networks}, 
  year={2023},
  volume={72},
  number={3},
  pages={3724-3738},
  doi={10.1109/TVT.2022.3221557}}

@ARTICLE{9535285,
  author={Singya, Praveen Kumar and Alouini, Mohamed-Slim},
  journal={IEEE Trans. Aerosp. Electron. Syst.}, 
  title={Performance of {UAV}-Assisted Multiuser Terrestrial-Satellite Communication System Over Mixed {FSO/RF} Channels}, 
  year={2022},
  volume={58},
  number={2},
  pages={781-796},
  doi={10.1109/TAES.2021.3111787}}

@ARTICLE{9560149,
  author={Hajijamali Arani, Atefeh and Azari, M. Mahdi and Hu, Peng and Zhu, Yeying and Yanikomeroglu, Halim and Safavi-Naeini, Safieddin},
  journal={IEEE IoT J.}, 
  title={Reinforcement Learning for Energy-Efficient Trajectory Design of {UAV}s}, 
  year={2022},
  volume={9},
  number={11},
  pages={9060-9070},
  doi={10.1109/JIOT.2021.3118322}}

@ARTICLE{6449258,
  author={Sreng, Sokchenda and Escrig, Benoit and Boucheret, Marie-Laure},
  journal={IEEE Trans. Wireless Commun.}, 
  title={Exact Symbol Error Probability of {Hybrid/Integrated} Satellite-Terrestrial Cooperative Network}, 
  year={2013},
  volume={12},
  number={3},
  pages={1310-1319},
  doi={10.1109/TWC.2013.013013.120899}}

@ARTICLE{8894851,
  author={Huang, Qingquan and Lin, Min and Zhu, Wei-Ping and Chatzinotas, Symeon and Alouini, Mohamed-Slim},
  journal={IEEE Transactions on Aerospace and Electronic Systems}, 
  title={Performance Analysis of Integrated Satellite-Terrestrial Multiantenna Relay Networks With Multiuser Scheduling}, 
  year={2020},
  volume={56},
  number={4},
  pages={2718-2731},
  doi={10.1109/TAES.2019.2952698}}

@ARTICLE{9672696,
  author={Zhao, Zhongyuan and Xu, Guanjun and Zhang, Ning and Zhang, Qinyu},
  journal={IEEE Transactions on Vehicular Technology}, 
  title={Performance Analysis of the Hybrid Satellite-Terrestrial Relay Network With Opportunistic Scheduling Over Generalized Fading Channels}, 
  year={2022},
  volume={71},
  number={3},
  pages={2914-2924},
  doi={10.1109/TVT.2021.3139885}}

@ARTICLE{9062335,
  author={Kong, Huaicong and Lin, Min and Zhu, Wei-Ping and Amindavar, Hamidreza and Alouini, Mohamed-Slim},
  journal={IEEE Wireless Commun. Lett.}, 
  title={Multiuser Scheduling for Asymmetric {FSO/RF} Links in Satellite-{UAV}-Terrestrial Networks}, 
  year={2020},
  volume={9},
  number={8},
  pages={1235-1239},
  doi={10.1109/LWC.2020.2986750}}

@ARTICLE{8387982,
  author={Qi, Ting and Feng, Wei and Wang, Youzheng},
  journal={China Communications}, 
  title={Outage performance of non-orthogonal multiple access based unmanned aerial vehicles satellite networks}, 
  year={2018},
  volume={15},
  number={5},
  pages={1-8},
  doi={10.1109/CC.2018.8387982}}

@ARTICLE{9714005,
  author={Wang, Ningyuan and Li, Feng and Chen, Dong and Liu, Liang and Bao, Zeyu},
  journal={IEEE Transactions on Vehicular Technology}, 
  title={{NOMA}-Based Energy-Efficiency Optimization for {UAV} Enabled Space-Air-Ground Integrated Relay Networks}, 
  year={2022},
  volume={71},
  number={4},
  pages={4129-4141},
  doi={10.1109/TVT.2022.3151369}}

@INPROCEEDINGS{7899525,
  author={Akiyoshi, Tomohiro and Okamoto, Eiji and Tsuji, Hiroyuki and Miura, Amane},
  booktitle={International Conference on Information Netorking}, 
  title={Performance improvement of satellite/terrestrial integrated mobile communication system using unmanned aerial vehicle cooperative communications}, 
  year={2017},
  volume={},
  number={},
  pages={417-422},
  doi={10.1109/ICOIN.2017.7899525}}

@ARTICLE{9206550,
  author={Ranjha, Ali and Kaddoum, Georges},
  journal={IEEE Internet of Things J.}, 
  title={{URLLC} Facilitated by Mobile {UAV} Relay and {RIS}: A Joint Design of Passive Beamforming, Blocklength, and {UAV} Positioning}, 
  year={2021},
  volume={8},
  number={6},
  pages={4618-4627},
  doi={10.1109/JIOT.2020.3027149}}

@INPROCEEDINGS{9149409,
  author={Kang, Zhenyu and You, Changsheng and Zhang, Rui},
  booktitle={Proc. IEEE Int. Conf. Commun. (ICC)}, 
  title={Placement Learning for Multi-{UAV} Relaying: A {G}ibbs Sampling Approach}, 
  year={2020},
  volume={},
  number={},
  pages={1-6},
  doi={10.1109/ICC40277.2020.9149409}}

@ARTICLE{8653373,
  author={Qin, Dong and Wang, Yuhao and Zhou, Fuhui and Wong, Kai-Kit},
  journal={IEEE Sys. J.}, 
  title={Performance Analysis of {AF} Relaying With Selection Combining in Nakagami-$m$ Fading}, 
  year={2019},
  volume={13},
  number={3},
  pages={2375-2385},
  doi={10.1109/JSYST.2019.2897824}}

@ARTICLE{6676778,
  author={Som, Pritam and Chockalingam, A.},
  journal={IEEE Commun. Lett.}, 
  title={Bit Error Probability Analysis of {SSK} in {DF} Relaying with Threshold-Based Best Relay Selection and Selection Combining}, 
  year={2014},
  volume={18},
  number={1},
  pages={18-21},
  doi={10.1109/LCOMM.2013.111113.131593}}

@ARTICLE{9385374,
  author={Fang, Xinran and Feng, Wei and Wei, Te and Chen, Yunfei and Ge, Ning and Wang, Cheng-Xiang},
  journal={IEEE IoT J.}, 
  title={{5G} Embraces Satellites for {6G} Ubiquitous {IoT}: Basic Models for Integrated Satellite Terrestrial Networks}, 
  year={2021},
  volume={8},
  number={18},
  pages={14399-14417},
  doi={10.1109/JIOT.2021.3068596}}

@article{al2014optimal,
  title={Optimal {LAP} altitude for maximum coverage},
  author={Al-Hourani, Akram and Kandeepan, Sithamparanathan and Lardner, Simon},
  journal={IEEE Wireless Commun. Lett.},
  volume={3},
  number={6},
  pages={569--572},
  year={2014},
  publisher={IEEE}
}

@INPROCEEDINGS{7037248,
  author={Al-Hourani, Akram and Kandeepan, Sithamparanathan and Jamalipour, Abbas},
  booktitle={ IEEE GLOBECOM}, 
  title={Modeling air-to-ground path loss for low altitude platforms in urban environments}, 
  year={2014},
  volume={},
  number={},
  pages={2898-2904},
  doi={10.1109/GLOCOM.2014.7037248}}

@ARTICLE{10164260,
  author={Nguyen, Minh-Hien T. and Bui, Tinh T. and Nguyen, Long D. and Garcia-Palacios, Emiliano and Zepernick, Hans-Jürgen and Shin, Hyundong and Duong, Trung Q.},
  journal={IEEE Transactions on Intelligent Transportation Systems}, 
  title={Real-Time Optimized Clustering and Caching for {6G} Satellite-{UAV}-Terrestrial Networks}, 
  year={2023},
  volume={},
  number={},
  pages={1-11},
  doi={10.1109/TITS.2023.3287279}}

@ARTICLE{8489991,
  author={Wang, Jingjing and Jiang, Chunxiao and Wei, Zhongxiang and Pan, Cunhua and Zhang, Haijun and Ren, Yong},
  journal={IEEE IoT J.}, 
  title={Joint {UAV} Hovering Altitude and Power Control for Space-Air-Ground {IoT} Networks}, 
  year={2019},
  volume={6},
  number={2},
  pages={1741-1753},
  doi={10.1109/JIOT.2018.2875493}}

@article{zong2019optimal,
  title={Optimal satellite {LEO} constellation design based on global coverage in one revisit time},
  author={Zong, Peng and Kohani, Saeid},
  journal={International Journal of Aerospace Engineering},
  volume={2019},
  pages={1--12},
  year={2019},
  publisher={Hindawi Limited}
}

@ARTICLE{9775682,
  author={Abderrahim, Wiem and Amin, Osama and Alouini, Mohamed-Slim and Shihada, Basem},
  journal={IEEE Trans. Commun.}, 
  title={Proactive Traffic Offloading in Dynamic Integrated Multi-satellite Terrestrial Networks}, 
  year={2022},
  volume={70},
  number={7},
  pages={4671-4686},
  doi={10.1109/TCOMM.2022.3175482}}

@INPROCEEDINGS{10464433,
  author={Omran, Allafi and Cheriet, Mohamed and Sboui, Lokman},
  booktitle={2023 IEEE Globecom Workshops (GC Wkshps)}, 
  title={{UAV}-Assisted Offloading Via Satellite Backhaul for Post-Disaster and Crowded Cellular Networks}, 
  year={2023},
  volume={},
  number={},
  pages={824-829},
  doi={10.1109/GCWkshps58843.2023.10464433}}

@ARTICLE{4062836,
  author={Chowdhury, Pulak K. and Atiquzzaman, Mohammed and Ivancic, William},
  journal={IEEE Commun. Surv. Tutor.}, 
  title={Handover schemes in satellite networks: state-of-the-art and future research directions}, 
  year={2006},
  volume={8},
  number={4},
  pages={2-14},
  doi={10.1109/COMST.2006.283818}}

@article{labrador2009approach,
  title={An Approach to Cooperative Satellite Communications in {4G} Mobile Systems.},
  author={Labrador, Yuri and Karimi, Masoumeh and Pan, Deng and Miller, Jerry},
  journal={J. Commun.},
  volume={4},
  number={10},
  pages={815--826},
  year={2009}
}

@book{hong2010cooperative,
  title={Cooperative communications and networking: technologies and system design},
  author={Hong, Y-W Peter and Huang, Wan-Jen and Kuo, C-C Jay},
  year={2010},
  publisher={Springer Science \& Business Media}
}

@inproceedings{holter2002optimal,
  title={The optimal weights of a maximum ratio combiner using an eigenfilter approach},
  author={Holter, Bengt and Oien, Geir E},
  booktitle={5th Nordic Signal Processing Symposium},
  year={2002},
  organization={Citeseer}
}

@article{jain1984quantitative,
  title={A quantitative measure of fairness and discrimination},
  author={Jain, Rajendra K and Chiu, Dah-Ming W and Hawe, William R and others},
  journal={Eastern Research Laboratory, Digital Equipment Corporation, Hudson, MA},
  volume={21},
  year={1984}
}

@ARTICLE{7070670,
  author={Sun, Ruoyu and Hong, Mingyi and Luo, Zhi-Quan},
  journal={IEEE J. Sel. Areas in Commun.}, 
  title={Joint Downlink Base Station Association and Power Control for Max-Min Fairness: Computation and Complexity}, 
  year={2015},
  volume={33},
  number={6},
  pages={1040-1054},
  doi={10.1109/JSAC.2015.2416982}}

@ARTICLE{9555387,
  author={Bhola and Lai, Chuan-Chi and Wang, Li-Chun},
  journal={IEEE Transactions on Vehicular Technology}, 
  title={The Outage-Free Replacement Problem in Unmanned Aerial Vehicle Base Stations}, 
  year={2021},
  volume={70},
  number={12},
  pages={13390-13395},
  doi={10.1109/TVT.2021.3116701}}

@ARTICLE{9099807,
  author={Zhang, Zhaoji and Li, Ying and Huang, Chongwen and Guo, Qinghua and Liu, Lei and Yuen, Chau and Guan, Yong Liang},
  journal={IEEE Internet of Things Journal}, 
  title={User Activity Detection and Channel Estimation for Grant-Free Random Access in {LEO} Satellite-Enabled Internet of Things}, 
  year={2020},
  volume={7},
  number={9},
  pages={8811-8825},
  doi={10.1109/JIOT.2020.2997336}}

@INPROCEEDINGS{6503767,
  author={Badarneh, Osamah S. and Kadoch, Michel},
  booktitle={Proc. IEEE Global Commun. Conf. (GLOBECOM)}, 
  title={Performance analysis of dual-hop systems with fixed-gain relays over generalized {$\eta$-$\mu$} fading channels},  
  year={2012},
  volume={},
  number={},
  pages={4148-4152},
  doi={10.1109/GLOCOM.2012.6503767}}

@online{laws_taiwan,
	title = {Drone Laws in {Taiwan}},
	note = {access on July. 8, 2022},
	url = { https: //uavcoach.com/drone-laws-in-taiwan/},
}

@ARTICLE{9681631,
  author={Lin, Xingqin and Cioni, Stefano and Charbit, Gilles and Chuberre, Nicolas and Hellsten, Sven and Boutillon, Jean-Francois},
  journal={IEEE Commun. Magazine }, 
  title={On the Path to {6G}: Embracing the Next Wave of Low Earth Orbit Satellite Access}, 
  year={2021},
  volume={59},
  number={12},
  pages={36-42},
  doi={10.1109/MCOM.001.2100298}}

@ARTICLE{6280241,
  author={Sanguinetti, Luca and D'Amico, Antonio A. and Rong, Yue},
  journal={IEEE J. Sel. Areas Commun.}, 
  title={A Tutorial on the Optimization of {A}mplify-and-{F}orward {MIMO} {R}elay Systems}, 
  year={2012},
  volume={30},
  number={8},
  pages={1331-1346},
  doi={10.1109/JSAC.2012.120904}}

@article{AB-GC2025,
  title={Timing advance and Doppler shift estimation in {LEO} satellite networks: A recursive Bayesian study},
  author={Balakrishnan, Ashutosh and Popineau, Pierre and Martins, Philippe},
  journal={arXiv preprint arXiv:2506.08739},
  year={2025}
}

@inproceedings{AB-VTC2024,
  title={Intelligent {UAV} Swarm Coexistence in {DTV} Bands},
  author={Dubey, Rajrshi and Balakrishnan, Ashutosh and De, Swades},
  booktitle={IEEE 100th Vehicular Technology Conference (VTC2024-Fall) },
  pages={1--5},
  year={2024},
  organization={IEEE}
}

@inproceedings{AB-ICC2024,
  title={{HAPS}-aided power grid connected green communication framework: Architecture and optimization},
  author={Balakrishnan, Ashutosh and De, Swades and Wang, Li-Chun},
  booktitle={Proc. IEEE Int. Conf. Commun. (ICC)},
  pages={2847--2852},
  year={2024},
  organization={IEEE}
}
%
\begin{IEEEbiography}[{\includegraphics[width=1in,height=1.25in,clip,keepaspectratio]{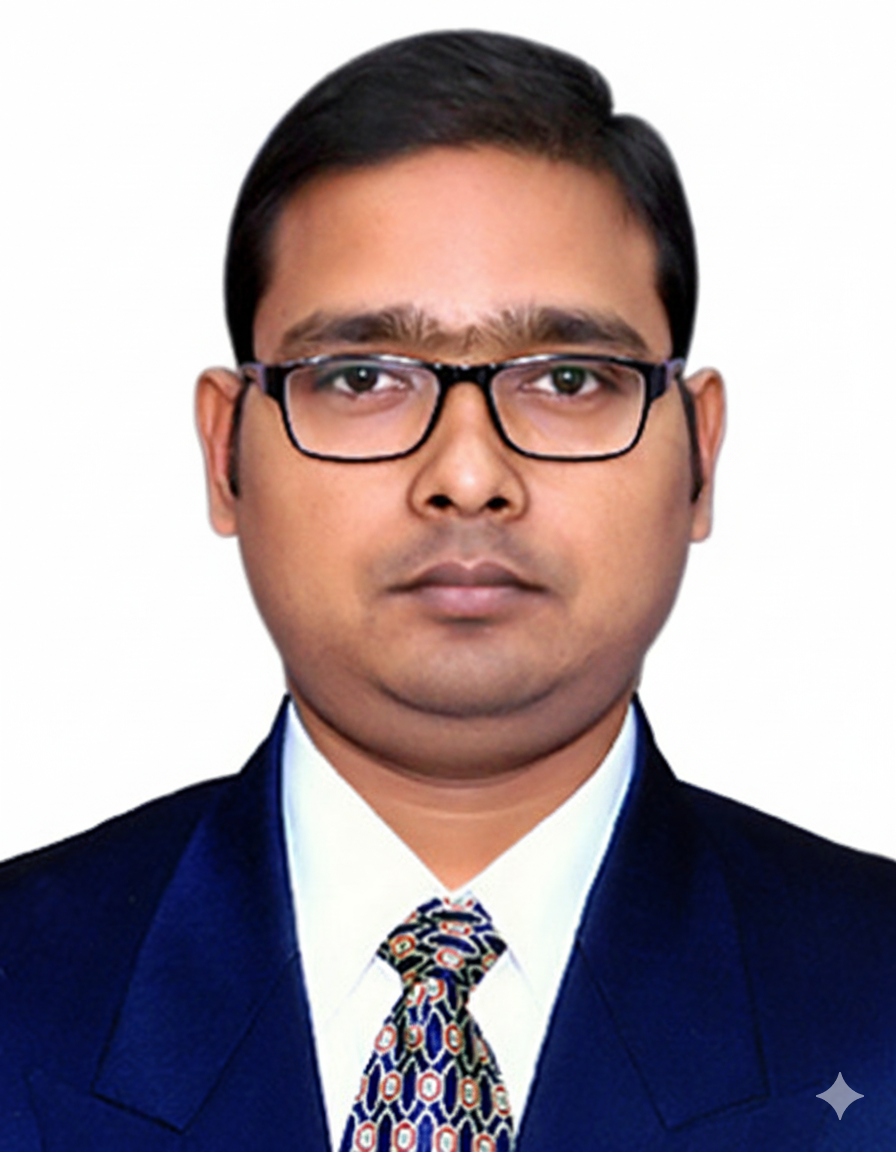}}]{Bhola}
(Member, IEEE) received the B.Tech. degree in electronics and communication engineering from the Dr. A.P.J. Abdul Kalam Technical University, Lucknow, India, in 2011, the M.Tech. degree in wireless communication and networks from the Gautam Buddha University, Greater Noida, India, and the Ph.D. degree in electrical engineering and computer science from the National Yang Ming Chiao Tung University, Hsinchu, Taiwan, in 2025. He is currently a Postdoctoral Fellow in the Institute of Communications Engineering, National Sun Yat-sen University, Kaohsiung, Taiwan, where he is affiliated with the 6G communication and sensing research center. His research interests include mobility management, UAV communication systems, non-terrestrial networks, future internet, and network optimization for 5G/6G systems. He received the Outstanding Research Award from the Department of EECS at the NYCU, Hsinchu, Taiwan in 2021. He has served as a reviewer for numerous IEEE journals and conferences, including \textsc{IEEE Transactions on Vehicular Technology}, \textsc{IEEE Transactions on Intelligent Transportation Systems}, \textsc{IEEE WoWMoM}, \textsc{IEEE VTC}, \textsc{IEEE SmartGridComm}, and others, as well as \textit{Wireless Networks} (Springer).
\end{IEEEbiography}
\begin{IEEEbiography}[{\includegraphics[width=1in,height=1.25in,clip,keepaspectratio]{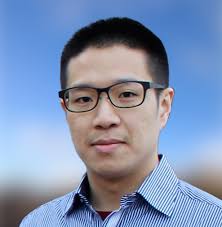}}]{Yu-Jia Chen} (Senior member, IEEE)  received the B.S. and Ph.D. degrees in electrical engineering from National Chiao Tung University, Hsinchu, Taiwan, in 2010 and 2015, respectively.
From 2015 to 2018, he was a Postdoctoral Research Fellow with National Chiao Tung University. From November 2018 to January 2019, he was a Visiting Scholar with the John A. Paulson School of Engineering and Applied Sciences, Harvard University, Cambridge, MA, USA. In 2019, he joined National Central University, Taoyuan, Taiwan, where he is currently an Associate Professor with the Department of Communication Engineering. His research interests include AI-enabled wireless networks, aerial sensor networks, and IoT security. He has published more than 50 articles in peer-reviewed international journals and conference papers.
Dr. Chen was a recipient of the Best Paper Award for Postdoctoral Research Fellow from the Ministry of Science and Technology, Taiwan, in 2018, the Outstanding Advisor Award from the Taiwan Institute of Electrical and Electronic Engineering in 2019, Outstanding Research Award of the National Central University, Taiwan, in 2023, and the Young Researcher Award from the Taiwan Association of Cloud Computing in 2023. He has been serving as a Technical Organizing Committee member and the Symposium Co-Chair for many international conferences and symposia, including Globecom, ICC, and PIMRC. He has experience with tutorials at academic conferences, such as Globecom and VTC.
\end{IEEEbiography}
\begin{IEEEbiography}[{\includegraphics[width=1in,height=1.25in,clip,keepaspectratio]{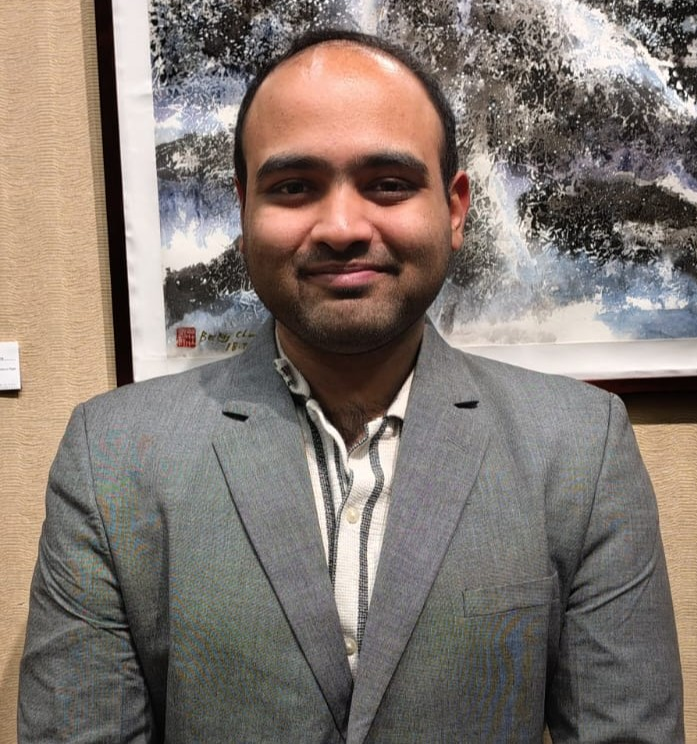}}]{Ashutosh Balakrishnan}
(Member, IEEE) received the Ph. D. degree in Electrical Engineering jointly from I.I.T. Delhi, India and N.Y.C.U. Taiwan, in July 2024; and the B. Tech. (Hons.) degree in Electronics and Telecommunication engineering from the N.I.T. Raipur, India, in 2019. 
Dr. Balakrishnan is currently a Maître de Conférences, at the Dept. of Computer Science and Networks, Télécom Paris, France. He is also an associate member of LINCS, France, and a visiting researcher at INRIA Paris. Previously (Sept. 2024 - July 2025), he was a post-doctoral fellow at Télécom Paris, France. 
His research interests broadly pertain to analytical modeling based system design and performance evaluation of stochastic networks. Current research directions include LEO satellite networks, joint communication \& sensing, and green communications. 
He has been a recipient of the Prime Minister’s Research Fellowship, Govt. of India; the Best Journal Award, ICST, NYCU Taiwan (2022); and the Innovation Excellence Award, COMSNETS (2024). Dr. Balakrishnan currently serves as the Editor of IEEE TRANSACTIONS ON COMMUNICATIONS. 
\end{IEEEbiography}
\begin{IEEEbiography}[{\includegraphics[width=1in,height=1.25in,clip,keepaspectratio]{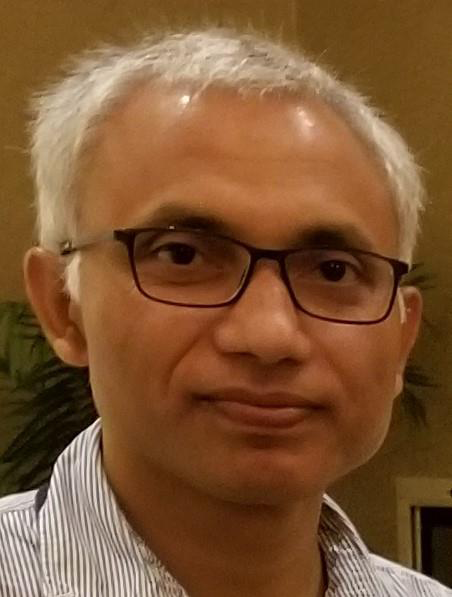}}]{Swades De}
(Senior member, IEEE)  received the B.Tech. degree in Radiophysics and Electronics from the University of Calcutta in 1993, the M.Tech. degree in Optoelectronics and Optical communication from IIT Delhi in 1998, and the Ph.D. degree in Electrical Engineering from the State University of New York at Buffalo in 2004. Dr. De is currently a Professor with the Department of Electrical Engineering, IIT Delhi. Before moving to IIT Delhi in 2007, he was a Tenure-Track Assistant Professor with the Department of ECE, New Jersey Institute of Technology, Newark, NJ, USA, from 2004–2007. He worked as an ERCIM Post-doctoral Researcher at ISTI-CNR, Pisa, Italy (2004), and has nearly five years of industry experience in India on telecom hardware and software development, during 1993–1997 and in 1999. His research interests are broadly in communication networks, with emphasis on performance modeling and analysis. Current directions include resource allocation, energy harvesting, wireless energy transfer, energy sustainable and green communications, spectrum sharing, smart grid networks, and IoT communications. Dr. De currently serves as an Editor of IEEE TRANSACTIONS ON WIRELESS COMMUNICATIONS, and Associate Editor of IEEE TRANSACTIONS ON VEHICULAR TECHNOLOGY, and IEEE WIRELESS COMMUNICATIONS MAGAZINE.
\end{IEEEbiography}
\begin{IEEEbiography}
[{\includegraphics[width=1.1in,trim=1cm 0.5cm 1cm 0.5cm,clip]{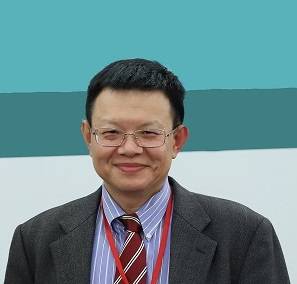}}]{Li-Chun Wang} (Fellow, IEEE) received Ph.D. degree from the Georgia Institute of Technology in 1996. From 1996 to 2000, he served as a Senior Researcher at the Wireless Communications Research Institute at AT\&T Labs. He is currently the Dean of the College of Electrical Engineering and a Lifetime Chair Professor in the Department of Electrical Engineering at National Yang Ming Chiao Tung University. Dr. Wang was elected as a Fellow of the Institute of Electrical and Electronics Engineers (IEEE) in 2011 for his contributions to the design of cellular architectures and wireless resource management in wireless networks.
Dr. Wang has received numerous awards and honors, including the Distinguished Research Awards from National Science and Technology Council twice (2012 and 2016), the Future Tech Award from the National Science and Technology Council (2021), the Chinese Institute of Engineers (CIE) Outstanding Electrical Engineering Professor Award (2022), the Outstanding Engineering Professor Award from the Chinese Institute of Electrical Engineering (2009), the K.T. Li Fellow Award (2021) and the Medal of Honor (2024) from the Institute of Information \& Computing Machinery (iICM), the Outstanding ICT. Elite Award (2020), the Y. Z. Hsu Scientific Paper Award (2013), the Y. Z. Hsu Scientific Chair Professor (2023), the Chinese Institute of Engineers (CIE) Fellow (2025), and the Pan Wen Yuan Foundation, Outstanding Research Award (2025), the Chinese Institute of Engineers (CIE) Engineering Medal (2025).
Dr. Wang has made significant contributions to the research fields of wireless communication and information technology. He has served as an IEEE Tutorial Speaker multiple times, promoting international cooperation and talent cultivation. According to Google Scholar, Dr. Wang's research works have been cited over 11,515 times with an h-index of 54. He was listed in the "2024 Annual Global Top 2\% Scientists" and "Lifetime Scientific Impact Rankings" by Stanford University. He is also ranked as a top Taiwanese international scholar in the field of computer science by the Guide2 Research website.
Dr. Wang serves as the Director of the Chunghwa Telecom-NYCU Innovation Research Center and the NYCU-IBM iIoT Research Center. He has collaborated with numerous domestic and international companies and holds 49 domestic and international patents, sixteen of which have been applied in commercial products. He is currently an Associate Editor of the IEEE Transactions on Wireless Communications and the IEEE Internet of Things Journal. His recent research interests lie in data-driven intelligent wireless communications, brain technology, and sustainable development. He has published over 300 journal and conference papers and co-edited the book "Key Technologies for 5G Wireless Communications" (Cambridge University Press, 2017).
\end{IEEEbiography}

\end{document}